\documentclass[iopjournal,10pt,onecolumn,superscriptaddress,floatfix,nofootinbib,showpacs,longbibliography]{revtex4-2}
\usepackage{makecell}
\newcolumntype{C}[1]{>{\centering\arraybackslash}m{#1}}
\usepackage{booktabs,adjustbox}
\usepackage{float}
\usepackage[normalem]{ulem}
\usepackage[utf8]{inputenc}  
\usepackage[T1]{fontenc}     
\usepackage[table]{xcolor}
\usepackage{color,soul}
\usepackage[british]{babel}  
\usepackage[scaled=0.86]{berasans}  
\usepackage[colorlinks=true, citecolor=blue, urlcolor=blue]{hyperref}  
\usepackage{graphicx} 
\usepackage[babel]{microtype}  
\usepackage{amsmath,amssymb,amsthm,bm,amsfonts,mathrsfs,bbm} 

\usepackage{xspace}  
\usepackage{pgf,tikz}
\usepackage{xcolor}
\usepackage{multirow}
\usepackage{array}
\usepackage{bigstrut}
\usepackage{braket}
\usepackage{color}
\usepackage{natbib}
\usepackage{multirow}
\usepackage{mathtools}
\usepackage{float}
\usepackage{blkarray}
\usepackage[caption = false]{subfig}
\usepackage{xcolor,colortbl}
\usepackage{color}
\usepackage{rotating}
\usepackage{tikz}
\usepackage{tabularx}
\usepackage{cancel}

\newcommand{\Tr}{\operatorname{Tr}}

\newcommand{\be}{\begin{equation}}
\newcommand{\ee}{\end{equation}}
\newcommand{\ba}{\begin{eqnarray}}
\newcommand{\ea}{\end{eqnarray}}
\newcommand{\ketbra}[2]{|#1\rangle \langle #2|}

\newtheorem{theorem}{Theorem}

\newtheorem{definition}{Definition}
\newtheorem{proposition}{Proposition}

\newtheorem{example}{Example}

\newtheorem{lemma}{Lemma}

\def\>{\rangle}
\def\<{\langle}

\usepackage{centernot}
\usepackage{subfig}
\usepackage[dvipsnames]{xcolor}

\begin{document}

\title{Demultiplexing Generalized Information via Quantum Transmission Lines} 
\author{Soham Sau}
\affiliation{RCQI, Institute of Physics, Slovak Academy of Sciences, Dúbravská cesta 9, 84511 Bratislava, Slovakia}
\author{Anna Jenčová}
\affiliation{Mathematical Institute, Slovak Academy of Sciences,  Štefánikova 49, 814 73 Bratislava, Slovakia}
\author{Tamal Guha}
\affiliation{Mathematical Institute, Slovak Academy of Sciences,  Štefánikova 49, 814 73 Bratislava, Slovakia}
\begin{abstract}
   Demultiplexers are the fundamental primitives of network architecture, enabling perfect routing of an input classical signal to a designated one, among multiple output ports. Quantum transmission lines, having access to the quantum systems directly, are able to transmit both the classical and quantum information encoded in quantum systems. A natural question therefore emerges that whether the scrambled classical and quantum information in a quantum system can be perfectly demultiplexed in the designated classical and quantum output ports? Here we answer this question by introducing a quantum to quantum-classical device, namely the quantum demultiplexer (Q-DEMUX). We characterize the class of Q-DEMUXs enabling perfect routing of both the classical and the quantum information along with their simple circuit realizations. Our results highlight an explicit connection between the strength of a Q-DEMUX with the incompatibility of quantum instruments. Finally, we extend the notion in a stronger variant where the sender is oblivious regarding the nature of the data to be transmitted through the Q-DEMUX.
\end{abstract}
\maketitle
\section{Introduction}
In a classical network, it is often required to route an input classical signal to one of the assigned output nodes, leaking no information to the others. From the simple telephone system to the modern days Packet switching, Multi-Protocol Label switching configurations for internet distributions are some of the instances, appeared in the telecommunication networks. A prototypical device to employ such a data-routing process is the demultiplexer (DEMUX) \cite{mano1972computer}. A 1-to-N DEMUX is essentially a Boolean circuit with one input, N output ports and a selector. By selecting the designated output port through the selector, the sender can route the input data to the corresponding output port.

With the advent of quantum information theory, along with the rapid development of quantum networks \cite{komar2014quantum, wehner2018quantum, pant2019routing} one may tempted to ask about demultiplexing the quantum information among multiple quantum nodes. While configuring the explicit quantum demultiplexer in an optical set-up was attempted in \cite{garcia2009quantum, xie2018quantum}, the implicit notion for the same has also been reported in \cite{bhattacharya2021random, guha2023quantum} with the coherent control over noisy quantum transmission lines. However, the quantum transmission lines, a.k.a, quantum channels, being capable of transmitting quantum systems directly could communicate both the classical as well as quantum information in any quantum network. Keeping this in mind, here we ask how to demultiplex the generalized information encoded in a quantum system in any network. More precisely, when a quantum system is transmitted through a channel, is it possible to send the encoded information to a quantum or a classical output port, depending upon the nature of data? This suggests for a quantum device with a single input quantum port, along with a quantum and a classical output port to store the encoded information in the input system. In addition, the sender (say, Alice) is equipped with a binary selector, to select the routing port depending upon the nature of the encoded information. Mimicking the structure of classical demultiplexer, the generalized information demultiplexer, namely the Q-DEMUX, sends a junk information to the classical port whenever the encoded information is quantum in nature and vice-versa. 

Here, we first show that such a Q-DEMUX can always be identified in terms of a pair of quantum instruments realizing the effective quantum channel between the input and output quantum port of the Q-DEMUX. The quantum instruments are the generalized quantum device, encapsulating both the quantum channels and the quantum measurement in a single set-up \cite{heinosaari2011mathematical,busch2016quantum,davies1970operational}. Accordingly, the demultiplexing strength of a Q-DEMUX is quantified in terms of the classical as well as quantum information processing of the quantum transmission line, in terms of the given instrument realizations. Setting an upper-bound for the demultiplexing strength of a general Q-DEMUX, we then characterize the class of quantum channels, for which there exists a Q-DEMUX configuration allowing to transfer both the classical and the quantum information perfectly in the respective output ports. Interestingly, the set of such channels are identified as a subset of entanglement breaking channels \cite{horodecki2003entanglement}, for which traditionally no quantum information can be communicated between two distant parties. For classical information processing tasks, however, the power of entanglement breaking channels is well studied \cite{shirokov2012conditions, chiribella2025communication, vieira2025entanglement}. While the circuit realization of a perfect Q-DEMUX may require an high dimensional ancillary system, we provide a simple circuit model to implement a specific class of perfect Q-DEMUXs. In particular, if for a perfect Q-DEMUX there exists a class of input quantum states for which the output states are also pure states of the same input Hilbert space, then such a Q-DEMUX can be implemented with ancillary quantum system of dimension identical to that of the input. Additionally, for any Q-DEMUX we set an upper-bound on the demultiplexing strength, the violation of which necessarily certifies the presence of incompatibility between the pair of instrument realizations of a quantum channels resulting the best possible classical and the quantum information demultiplexing.

Furthermore, we consider a stronger variant of Q-DEMUX, for which the sender is completely oblivious regarding the nature of data to be transmitted and hence do not require any access to the selector. In other words, the proper receiver, that is, whether the encoded information should be stored in the classical or the quantum output port is not known to Alice. Such an instance of random receiver quantum information transmission is particularly interesting in the contexts of delegated quantum computation models \cite{childs2005secure, arrighi2006blind, fitzsimons2017private}. In the present scenario, analogously the selector-less demultiplexing strength replicate the similar picture in question of delegated quantum-classical communication model. While similar to the earlier case here we derive an upper-bound for the selector-less demultiplexing strength, we only characterize two different classes of quantum channels for which there exists a Q-DEMUX configuration achieving the optimal selector-less demultiplexing strength. We also conjecture that these are the only possible quantum channels, for which there exists a single Q-DEMUX configuration to obtain the optimal selector-less demultiplexing strength. Finally, we conclude with the possibilities of extending the notion of selector-less Q-DEMUX configuration towards estimating the general information processing abilities of a quantum instrument.  
%
%
\section{Mathematical Framework}
Classically, a demultiplexer is a device with a single classical input port with an \(n\)-bit string \(X^n\) and with \(m\) possible output ports, each capable of storing an \(n\)-bit classical message. The sender having an access to an additional \(\lceil\log_2m\rceil\)-bit strings, namely the selectors, would able to transfer the input classical message \(X^n\) in any of the desired output ports among \(\{1,2,\cdots,m\}\). 

In a similar fashion, it is possible to design such demultiplexers capable of transmitting the quantum information. Advancing in this direction, here we propose a quantum demultiplexer (Q-DEMUX), which is capable of separating the classical or the quantum information stored in the input quantum system. This implies a Q-DEMUX \(\mathrm{D}\) can be operationally seen as a device with a single input quantum port (\(A\)) along with two kinds of output ports: the first one (\(Q\)) stores the quantum information, while the second one (\(C\)) is a classical register only. The sender (say Alice), by tuning a binary selector \(s\in\{0,1\}\) is able to transfer the information in the designated port depending upon it's nature. Mathematically, such a device can be represented as a quantum to quantum-classical completely positive trace preserving (CPTP) map 
\[\mathrm{D}_s:\mathcal{D}(\mathcal{H}_{in})_A\to\mathcal{D}(\mathcal{H}_{out})_Q\times X_C,\]
where, \(\mathcal{D}(\mathcal{H}_{in})\) and \(\mathcal{D}(\mathcal{H}_{out})\) denotes the set of density operators on the input and the output Hilbert spaces, respectively. Additionally, \(X\) is the set of classical random variables \(x\in \{0,\cdots,|X|-1\}\) stored in the classical register \(C\). Incorporating the role of the selector, we can further represent the action of the Q-DEMUX \(\mathrm{D}_s\), such that, for every \(\rho\in\mathcal{D}(\mathcal{H}_{in})\),
\begin{align}\label{e1}
\mathrm{D}_s(\rho_A)=&\sum_xp(x,\rho,s)\sigma(x,\rho,s)_Q\otimes\ketbra{x}{x}_C\\\nonumber&=\bigoplus_{x}p(x,\rho,s)\sigma(x,\rho,s)_Q.
\end{align}\noindent
Here, \(p(x,\rho,s)\) denotes the probability at which the classical register \(C\) stores \(x\in X\) for the input state \(\rho\) and the selector choice \(s\); and the corresponding output state is designated as \(\sigma(x,\rho,s)\) at the output quantum port \(Q\). From now on, we often denote the classical indices \(\{x\in X\}\) as the computational basis \(\{\ketbra{x}{x}\}_x\) of a \(|X|\)-dimensional quantum system. Additionally, since the states in the classical register \(\{\ketbra{x}{x}\}_x\) are mutually orthogonal, we can also represent the action as a direct sum of all the quantum states \(\sigma(x,\rho,s)\) appeared at \(Q\) with a probability of \(p(x,\rho,s)\) as shown in Eq.(\ref{e1}). Further, the suffixes \(\{A,Q,C\}\) there respectively denotes the Alice's input, quantum output and the classical output ports, which we will drop occasionally in the rest of the manuscript when there is no confusion. 

Also note that, we append the selector \(s\in\{0,1\}\) to the Q-DEMUX \(\mathrm{D}_s\), which can in general be considered as a two-level quantum system \(\ket{s}\in\{\ket{0},\ket{1}\}\). Therefore, we can also denote the action of the Q-DEMUX by invoking the role of the selector qubit, as 
\begin{align*}
    \mathrm{D}_s:\mathcal{D}(\mathcal{H}_{in})_A\otimes\mathbb{C}_s^2\to\mathcal{D}(\mathcal{H}_{out})_Q\times X_C.
\end{align*}
 Here, the two-dimensional complex Hilbert space \(\mathbb{C}^2\) corresponds to the selector qubit.

Within this framework, we can now define the classical information processed to the classical register \(C\). In general, the output classical random variable can be characterized as a new random variable \(z\in Z\), extracted from the output \(x\in X\). That is, we can define an arbitrary \(|X|\)-to-\(|Z|\) function \(f\), such that, the true output random variable \(z=f(x)\). Now, to process the classical information, the input random variable \(y\in Y\) is encoded to a set of quantum states \(\{\rho_y\in\mathcal{D}(\mathcal{H}_{in})\}_y\) and sent it through the Q-DEMUX, along with setting an appropriate value to the selector (say \(s=0\)). Accordingly, the classical demultiplexing strength of the Q-DEMUX with the specific encoding \(\sum_yp_y\rho_y:=\rho\), can be characterized as
\begin{gather}\label{e2}
    \mathcal{C}(\mathrm{D}_0,\rho_{RA})=\max_{f:X\to Z}I_{\rho_{RA}}(Y:Z),
\end{gather}\noindent
 where \(\rho_{RA}=\sum_yp_y\ketbra{y}{y}_R\otimes(\rho_y)_A\) is the classical quantum state between the a reference system \(R\), generating the classical symbols \(y\in Y\)with the probability distribution \(\{p_y\}_y\) and the Alice \(A\), possessing the input port of the Q-DEMUX. Note that, the suffix "0" for \(\mathrm{D}\) denotes the choice of the selector \(s=0\) in context of classical information processing. Moreover, here \(I(Y:Z)=H(p_y)+H(p_z)-H(p_{yz})\) is the mutual information between the input and the processed output random variables (\(Y\) and \(Z\) respectively) and \(H(p_m)=-\sum_mp_m\log_2p_m\) denotes the Shannon entropy for the random variable \(m\in M\) following the probability distribution \(\{p_m\}_m\). In particular, here the joint probability \(p_{yz}=\sum_xf(z|x)p(x,\rho_y,s=0)\) and \(p_z=\sum_yp_{yz}\).
 
 On the other hand, to process the quantum information, one could apply a predefined CPTP map on the output quantum state, depending upon the classical outcome registered in \(C\). In general, let us consider a \(|X|\)-to-\(|T|\) function \(g\), such that \(T\ni t=g(x)\) and accordingly the set of CPTP maps \(\{\mathcal{N}_t\}_t\) can be applied on the output quantum state at \(Q\). 
Let us now consider a bipartite quantum state \(\ket{\psi}_{RA}\in\mathcal{H}_{in}^{\otimes 2}\) between the reference (\(R\)) and Alice (\(A\)), and the subsystem \(\rho_A=\Tr_R(\ketbra{\psi}{\psi})\) is sent through \(\mathrm{D}_s\) with selector choice \(s=1\).
 Now for each of the output classical random variable \(x\in X\), a correcting operation \(\mathcal{N}_{t=g(x)}:\mathcal{D}(\mathcal{H}_Q)\to \mathcal{D}(\mathcal{H}_B)\) is applied on the output quantum state, where \(\mathcal{H}_Q\cong\mathcal{H}_{out}\). Then the effective quantum demultiplexing strength of \(\mathrm{D}_1\) with the specific \(\rho_A\in\mathcal{D}(\mathcal{H}_{in})\), optimizing over all possible correcting operations \(\{\mathcal{N}_t\}_t\), can be defined as
\begin{align}\label{e3}
     \mathcal{Q}(\mathrm{D}_1,\ket{\psi}_{RA})=\max_{g:X\to T}\max_{\{\mathcal{N}_t\}_t}  I_{\sigma(\psi,s=1)}(R\rangle B),
 \end{align}\noindent
 where, \(I_{\sigma}(R\rangle B)=\max\{S(\sigma_B)-S(\sigma_{RB}),0\}\) is the coherent information of the quantum state \(\sigma_{RB}\), with \(S(\sigma)=-\Tr(\sigma \log\sigma)\) is the von-Neumann entropy of any quantum state \(\sigma\). Here, the effective corrected state  
 \begin{align}\label{e3.5}
 \sigma(\psi,s=1)_{RB}=[id_R\otimes \sum_{x,t}p(x,\rho_A,s=1)g(t|x)(\mathcal{N}_t)_{Q\to B}]\sigma(x,\rho_A,s=1)_{RQ}.\end{align}
 From now on, by \(id_\alpha\) we denote the identity map on \(\mathcal{L}(\mathcal{H}_\alpha)\). Moreover, \(\sigma_{RQ}(x,\psi,s=1)\) is the uncorrected output quantum state between \(R\) and the quantum register \(Q\), when the classical register clicks \(x\) and the selector is set to be \(s=1\). The non-signaling condition now implies 
 \[\rho_R=\Tr_A(\ketbra{\psi}{\psi}_{RA})=\Tr_B[\sigma(\psi,s=1)_{RB}=\sum_xp(x,\rho_A,s=1)\Tr_Q[\sigma(x,\psi,s=1)_{RQ}]].\]
 It is important to mention that the Eq. (\ref{e3}), in principle, should involve another maximization over all possible purification \(\ket{\phi}_{RA}\in\mathcal{H}_R\otimes\mathcal{H}_{in}\) of input quantum state \(\rho_A=\Tr_R(\ketbra{\phi}{\phi})\). However, all of them are connected with \(\ket{\psi}_{RA}\in\mathcal{H}_{in}^{\otimes 2}\) with \(\rho_A\) as the marginal, under local isometry on the subsystem \(R\) and since, local isometries can not change the coherent information, we have avoided the maximization. Furthermore, for the classical part in Eq. (\ref{e2}), the classical-quantum state \(\rho_{RA}\) for the input quantum state \(\rho_A=\Tr_R(\rho_{RA})\) is unique up to the local unitary on \(R\). Therefore, we have dropped the maximization there too.
 
 \begin{figure}[t]
    \centering
    \includegraphics[width=0.65\linewidth]{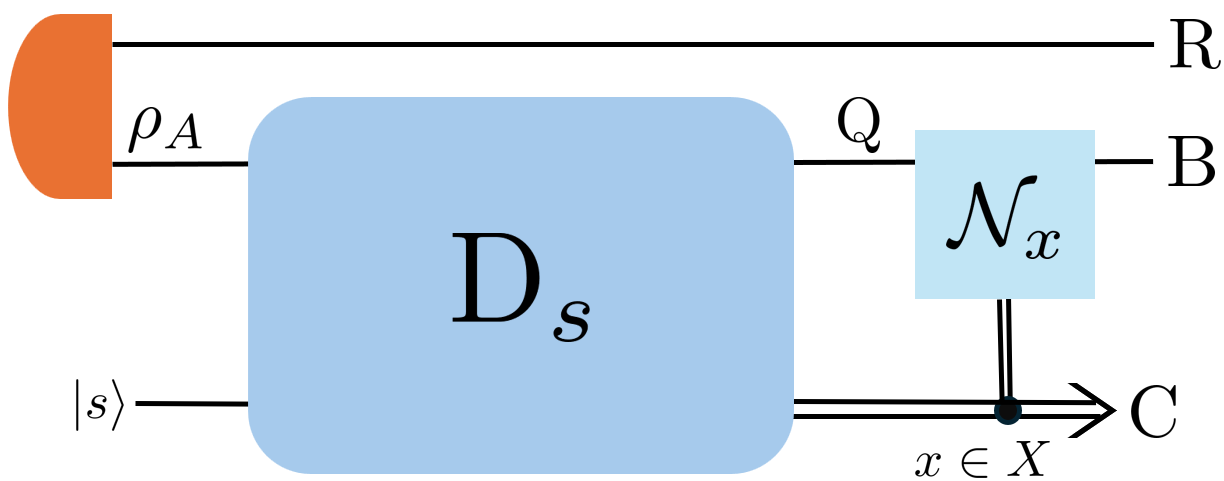}
    \caption{\textit{Schematic diagram of a Q-DEMUX.} A bipartite state between Alice \(A\) and the Reference \(R\) is prepared and Alice's marginal state \(\rho_A\) is sent through the Q-DEMUX. Depending upon the nature of information encoded in \(\rho_A\), the selector \(s\in\{0,1\}\) is chosen. Finally, at the output end depending upon the classical random variable \(x\in X\) registered in \(C\) a pre-decided correction operation \(\mathcal{N}_x\) is applied on the output quantum state at \(Q\).}
    \label{fig1}
\end{figure} 

Now, in the context of demultiplexer one can visualize both the classical and the quantum information processing in a single picture. The same marginal quantum state \(\rho_A\in\mathcal{D}(\mathcal{H}_{in})\) sent through the Q-DEMUX can share a specific correlation with the reference system depending upon the nature of data encoded in it. In particular, the marginal state shares a classical correlation with the reference system \(R\) in question of classical information processing, while the same marginal can be purified in the form of an entangled state with \(R\) to process the quantum information. In other words, operationally a Q-DEMUX takes an input quantum state at its input port and depending upon the information encoded in it Alice can route the system in a specific way, such that the encoded information stores in the corresponding output port. Within this framework we can now, define the total demultiplexing strength of a Q-DEMUX \(\mathrm{D}_s\) as
\begin{align}\label{eq:e4}
     \mathcal{T}_s(\mathrm{D}_s)=\max_{\rho_A\in\mathcal{D}(\mathcal{H}_{in})}[\mathcal{C}(\mathrm{D}_0,\rho_{RA})+\mathcal{Q}(\mathrm{D}_1,\ket{\psi}_{RA})],\text{ with }\rho_A=\Tr_R(\rho_{RA})=\Tr_R(\ketbra{\psi}{\psi}_{RA})
 \end{align}
 where, \(\mathcal{C}(\mathrm{D}_0,\rho_{RA})\) and \(\mathcal{Q}(\mathrm{D}_1,\ket{\psi}_{RA})\) are same as in Eqs. (\ref{e2}) and (\ref{e3}). Also, note that due to the uniqueness of \(\rho_{RA}\) and \(\ket{\psi}_{RA}\) for a given \(\rho_A\in\mathcal{D}(\mathcal{H}_{in})\) it is sufficient to optimize over the input quantum states in \(\mathcal{D}(\mathcal{H}_{in})\) in Eq. (\ref{eq:e4}). Operationally, these quantity suggests the maximal strength of Q-DEMUX to demultiplex the generalized information encoded in a single quantum state. We refer to Fig. \ref{fig1} for the schematic idea of the information demultiplexing process.  
 
  It is now important to emphasize on the role of the selector to configure the Q-DEMUX depending upon the nature of the generalized information the sender is supposed to process. For instance, consider a Q-DEMUX, where the input quantum state without any error transferred to the output port \(Q\), along with the computational basis measurement performed on an arbitrary ancillary quantum state to register the classical outcome at \(C\). This configuration allows perfect transmission of the quantum information, even without any further correction. Now, if the selector is so powerful that it could just swap the input and the ancillary qudit during the transmission of the classical information, then again the classical information can be perfectly demultiplexed through it, by encoding the classical random variable in the computation basis. Although it demultiplexes the information perfectly, this trivial construction admits no further restriction on the structure of the Q-DEMUX. While to make it more operationally feasible one may come up with various possible constraints on the architecture of the Q-DEMUX, here we consider a fairly simple one: \emph{The quantum channel induced from the Q-DEMUX must be invariant under the choice of the selector}. Using Eq.(\ref{e1}), this condition mathematically reads
 \[\Tr_C[\mathrm{D}_s(\rho_A)]=\sum_xp(x,\rho,s)\sigma(x,\rho,s)_Q=\sigma(\rho)_Q,\]
 for every possible input quantum state \(\rho_A\in\mathcal{D}(\mathcal{H}_{in})\) and for every choice of the selector \(s\in\{0,1\}\). Since the action in Eq. (\ref{e1}) is a CPTP map, we can identify the above operation as a CPTP-map \(\Lambda_{\text{ind}}:\mathcal{D}(\mathcal{H}_{in})\to\mathcal{D}(\mathcal{H}_{out})\), namely the induced quantum channel.
 %
 %
 %
 %
 \section{Main Results}
 In the abstract sense, a Q-DEMUX \(\mathrm{D}_s\) can be thought of a black box consists of two different quantum-to-quantum-classical devices \(\mathrm{D}_0\) and \(\mathrm{D}_1\). Moreover, the effective quantum channel should be identical for these two devices. Depending upon the nature of the data encoded in the input quantum state \(\rho_A\), one of these two devices are chosen by the binary selector \(s\). In this section, we will characterize the physical realizations of such Q-DEMUXs and their utilities in question of routing the input information to the designated output ports.
 \subsection{The Structure of Q-DEMUXs}
As discussed earlier, the output quantum state (at the port \(Q\)) and classical random variable (at \(C\)) from the Q-DEMUX may required to process further to obtain the encoded classical or quantum information. For a given Q-DEMUX \(\mathrm{D}_s\), one should therefore optimize over the triplet \((f,g,\{\mathcal{N}_x\}_x)\) to achieving the maximum demultiplexing strength. While intuitive, in the following we will first show that the functions \(f:X\to Z\) and \(g:X\to T\) are actually redundant. Therefore, to maximize the demultiplexing strength ,the Q-DEMUX is only appended with a set of correcting operations optimized over the CPTP maps from \(\mathcal{D}(\mathcal{H}_{Q})\) to \(\mathcal{D}(\mathcal{H}_B)\).
 \begin{lemma}\label{l1}
     To maximize \(\mathcal{T}_s(\mathrm{D}_s)\) of \(\mathrm{D}_s\), both the functions \(f:X\to Z\) and \(g:X\to T\) can be taken to be identity.    
 \end{lemma}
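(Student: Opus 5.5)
The two quantities in $\mathcal{T}_s(\mathrm{D}_s)$ are maximized independently over $f$ and over $(g,\{\mathcal{N}_t\}_t)$ for a fixed input $\rho_A$. It therefore suffices to show, separately, that the identity function is optimal in each maximization. In both cases the mechanism is the same: post-processing the classical register cannot increase the relevant information measure, whereas the identity loses nothing.

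\emph{Classical part.} Fix $\rho_{RA}=\sum_y p_y\ketbra{y}{y}_R\otimes(\rho_y)_A$ and an arbitrary (possibly stochastic) $f$, with $p_{yz}=\sum_x f(z|x)\,p(x,\rho_y,0)$. Under $f$, the variables $Y\to X\to Z$ form a Markov chain, because $Z$ depends on $Y$ only through $X$. The classical data-processing inequality then gives $I(Y:Z)\le I(Y:X)$. The value $I(Y:X)$ is attained by $f=\mathrm{id}$ with $Z=X$, so the maximum in Eq.~(\ref{e2}) is achieved at the identity.

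\emph{Quantum part.} Fix $\ket{\psi}_{RA}$ and an arbitrary pair $(g,\{\mathcal{N}_t\}_t)$. By Eq.~(\ref{e3.5}) the corrected state is
\[
\sigma_{RB}=\sum_x p(x,\rho_A,1)\,\Big[id_R\otimes\sum_t g(t|x)\,\mathcal{N}_t\Big]\sigma(x,\psi,1)_{RQ}.
\]
For each $x$ define $\mathcal{N}'_x:=\sum_t g(t|x)\mathcal{N}_t$. This is a convex combination of CPTP maps and hence itself CPTP from $\mathcal{D}(\mathcal{H}_Q)$ to $\mathcal{D}(\mathcal{H}_B)$. The same $\sigma_{RB}$ is therefore produced by the identity $g$ together with the correcting family $\{\mathcal{N}'_x\}_x$. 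Every value reachable with some $g$ is thus already reachable with $g=\mathrm{id}$ after optimizing over $\{\mathcal{N}_x\}_x$. The reverse inclusion is trivial, since $\mathrm{id}$ is one admissible choice of $g$. Hence the two maxima in Eq.~(\ref{e3}) coincide.

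Neither step is difficult. The only point that needs care is the stochastic case: if $f$ and $g$ are allowed to be conditional distributions $f(z|x)$ and $g(t|x)$, as the notation in the definitions suggests, the Markov-chain and convexity arguments above still apply unchanged. Combining the two parts, the maximum of $\mathcal{T}_s(\mathrm{D}_s)$ over $\rho_A$ is unchanged when both $f$ and $g$ are set to the identity.
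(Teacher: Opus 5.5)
Your proof is correct and follows the paper's route: data processing $I(Y:Z)\le I(Y:X)$ for the classical part, and absorbing $\sum_t g(t|x)\mathcal{N}_t$ into a new correction map $\mathcal{N}_x$ for the quantum part. You also spell out a few points the paper leaves implicit: the Markov-chain justification, the trivial reverse inclusion, and the independence of the two maximizations for fixed $\rho_A$.
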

 \begin{proof}
     The proof follows directly from the data-processing inequality. 
     
     Since, the random variable \(Z\) is post-processed from the output random variable \(X\) in the classical register \(C\), 
     for any input quantum ensemble, encoded with the classical random variable \(Y\),
     \[I(Y:Z)\leq I(Y:X).\]

     For the quantum information part, consider the corrected average quantum state as in Eq.(\ref{e3.5}), which can be rewritten as 
     \[\sigma(\psi,s=1)_{RB}=[id_R\otimes \sum_{x}p(x,\rho_A,s=1)\sum_tg(t|x)\mathcal{N}_t]\sigma(x,\rho_A,s=1)_{RQ}.\]
     Now, by replacing \(\sum_tg(t|x)\mathcal{N}_t=\mathcal{N}_x,\) as a new set of correction operations for all \(x\in X\), we can omit the random variable \(t=g(x)\).
 \end{proof}
While the set of correcting operations \(\{\mathcal{N}_x\}_x\) are externally appended to the Q-DEMUX, the inherent configuration of it depends upon the induced quantum channel \(\Lambda_{\text{ind}}\) and the classical random variables \(X\). Therefore, one can ask given a \(\Lambda_{\text{ind}}\), how to identify the set \(X\), i.e., the Q-DEMUX \(\mathrm{D}_s\) to maximize \(\mathcal{T}_s(\mathrm{D}_s)\). In the following we will answer this in terms of quantum instruments \cite{davies1970operational, busch2016quantum, heinosaari2011mathematical}.

Let us first recall that a quantum instrument $\mathcal D(\mathcal H_{in})\to \mathcal D(\mathcal H_{out})$ with outcomes in a (finite) set $X$ is a collection of completely positive trace non-increasing (CPTNI) maps $\mathcal{I}_x:\mathcal{D}(\mathcal{H}_{in})\to \mathcal{D}(\mathcal{H}_{out})$ such that 
$\mathcal{I}_{ind}:=\sum_x\mathcal{I}_x$ is trace preserving. Alternatively, an instrument can be
viewed as a channel with both quantum and classical outcomes, that is
\[
\mathcal{I}(\rho)=\sum_x \mathcal{I}_x(\rho)\otimes |x\rangle\langle x|.
\]
The two marginals of $\mathcal{I}$ are the induced channel $
\mathcal{I}_{ind}$ and the induced measurement 
\[
\rho\mapsto \sum_x
\Tr[\mathcal{I}_x(\rho)]|x\>\<x|,
\]
with the induced POVM $\{A_x^{\mathcal{I}}:=\mathcal{I}_x^*(I_{out})\}_{x\in X}$, determined by the condition 
$\Tr[A_x^\Lambda\rho]=\Tr
[\mathcal{I}_x(\rho)]$, for all $\rho\in \mathcal{D}(\mathcal{H}_{in})$. The instrument $\mathcal{I}$ is then called an instrument realization of the channel $\mathcal I_{ind}$.

An isometric dilation
of an instrument $\mathcal{I}$ is a triple $(\mathcal{H}_E, V, M)$, where $\mathcal{H}_E$ is a
Hilbert space, $V: \mathcal{H}_{in}\to \mathcal{H}_{out}\otimes \mathcal{H}_E$ is an isometry
and $M=\{M_x\}_{x\in X}$ is a POVM on $\mathcal{H}_E$, such that 
\[
\mathcal{I}_x(\rho)=\Tr_E[V\rho V^\dagger(I_{out}\otimes M_x)],\qquad \rho\in
\mathcal{D}(\mathcal{H}_{in}),\ x\in X.
\]
For the induced channel and POVM, we have
\[
\mathcal{I}_{ind}(\rho)=\Tr_E[V\rho V^\dagger],\ \rho\in \mathcal{D}(\mathcal{H}_{in});
\qquad A^\mathcal{I}_x=V^\dagger(I_{out}\otimes
M_x)V,\ x\in X.
\]
In particular, $(\mathcal{H}_E,V)$ is a Stinespring dilation of the channel $\mathcal{I}_{ind}$. Recall
that a Stinespring dilation is minimal if 
\[
\mathrm{span}\{(K\otimes I_E)V|\xi\>,\  K\in \mathcal{L}(\mathcal{H}_{out}),\ |\xi\>\in
\mathcal{H}_{in}\}=\mathcal{H}_{out}\otimes \mathcal{H}_E.
\]
In this case, the ancillary POVM $M$ is given uniquely. Moreover, any other Stinespring
dilation $(\mathcal{H}_{E'},V')$ of $\mathcal{I}_{ind}$ is connected to a minimal dilation
$(\mathcal{H}_E,V)$ by an isometry
$U:\mathcal{H}_E\to\mathcal{H}_{E'}$ such that $V'=(I_{out}\otimes U)V$,  and
$(\mathcal{H}_{E'},V',M')$ with $M':= UM U^\dagger$,  is an isometric dilation of the
instrument $\mathcal{I}$. 
Since any isometry can be extended to a unitary, we see that any pair of isometric
dilations $(\mathcal{H}_{E_1}, V_1, M_1)$ and $(\mathcal{H}_{E_2}, V_2, M_2)$ of
$\mathcal{I}$
are connected by a unitary $U:\mathcal{H}_{E_1}\to \mathcal{H}_{E_2}$ (enlarging one of the
spaces if necessary), such that 
\[
V_2=(I_{out}\otimes U)V_1,\qquad M_2=UM_1 U^\dagger. 
\]
\begin{lemma}\label{l2}
    Any Q-DEMUX \(\mathrm{D}_s\) can be configured in terms of two instrument realizations of the induced quantum channel \(\Lambda_{\text{ind}}\).
\end{lemma}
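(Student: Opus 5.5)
The plan is to read off the instruments directly from the form \eqref{e1} of the Q-DEMUX. For each fixed selector value $s\in\{0,1\}$, $\mathrm{D}_s$ is by assumption a CPTP map $\mathcal{D}(\mathcal{H}_{in})\to\mathcal{D}(\mathcal{H}_{out})\otimes X$ whose output is block diagonal in the classical register. I will define, for each $x\in X$,
\[
\mathcal{I}^{(s)}_x(\rho):=\Tr_C\big[(I_Q\otimes\ketbra{x}{x}_C)\,\mathrm{D}_s(\rho)\big]=p(x,\rho,s)\,\sigma(x,\rho,s).
\]
I will first check that each $\mathcal{I}^{(s)}_x$ is linear and completely positive. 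It is the composition of $\mathrm{D}_s$ with the CP map $\omega\mapsto\Tr_C[(I\otimes\ketbra{x}{x})\omega(I\otimes\ketbra{x}{x})]$, which is a Kraus-form compression followed by a partial trace. This also gives trace non-increasing, since $\Tr\mathcal{I}^{(s)}_x(\rho)=p(x,\rho,s)\le 1$.

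Next, summing over $x$ gives $\sum_x\mathcal{I}^{(s)}_x(\rho)=\Tr_C[\mathrm{D}_s(\rho)]$. This is trace preserving because $\mathrm{D}_s$ is, so $\{\mathcal{I}^{(s)}_x\}_x$ is a quantum instrument. The defining constraint of a Q-DEMUX, that the induced channel is selector-independent, says exactly that $\sum_x\mathcal{I}^{(0)}_x=\sum_x\mathcal{I}^{(1)}_x=\Lambda_{\text{ind}}$. So $\mathcal{I}^{(0)}$ and $\mathcal{I}^{(1)}$ are both instrument realizations of $\Lambda_{\text{ind}}$. Conversely, $\mathrm{D}_s(\rho)=\sum_x\mathcal{I}^{(s)}_x(\rho)\otimes\ketbra{x}{x}$ reconstructs the device, so the configuration is fully specified by this pair. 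The two instruments may be taken on a common outcome set $X$ by padding with zero maps if needed.

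I will also note the converse: any pair of instrument realizations of a channel $\Lambda$ defines a valid Q-DEMUX through the same formula. Then, using the isometric-dilation discussion, I will rephrase the result as follows: there exist a single Stinespring dilation $(\mathcal{H}_E,V)$ of $\Lambda_{\text{ind}}$ and two POVMs $M^{(0)},M^{(1)}$ on $E$ with $\mathcal{I}^{(s)}_x(\rho)=\Tr_E[V\rho V^\dagger(I\otimes M^{(s)}_x)]$. To get this, I will take a minimal dilation of $\Lambda_{\text{ind}}$. By the uniqueness statement recalled above, each instrument has an ancillary POVM on that same $E$. The selector then only switches which measurement is applied to the environment.

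The only delicate point is the linearity of $\rho\mapsto p(x,\rho,s)\sigma(x,\rho,s)$, since \eqref{e1} is written with state-dependent quantities. This follows because $\mathrm{D}_s$ itself is a linear CPTP map, so the main obstacle is just bookkeeping.
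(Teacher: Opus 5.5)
Your proposal is correct and follows essentially the paper's route: read off $\mathcal{I}^{(s)}_x(\rho)=p(x,\rho,s)\sigma(x,\rho,s)$ from \eqref{e1}, use the selector-independence of $\Tr_C[\mathrm{D}_s(\cdot)]$ to obtain two instrument realizations of $\Lambda_{\text{ind}}$, and then place both on a common minimal Stinespring dilation with two POVMs. The paper packages that last step as a single instrument on $\mathcal{H}_{in}\otimes\mathbb{C}^2$ with dilation $V\otimes I_2$ and controlled POVM $M_{0,x}\otimes\ketbra{0}{0}+M_{1,x}\otimes\ketbra{1}{1}$, which is your ``selector switches the measurement'' written out explicitly; your derivation of linearity and complete positivity, by composing $\mathrm{D}_s$ with the CP block-extraction map, is cleaner than the paper's appeal to physical realizability.
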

\begin{proof}
Note that, any Q-DEMUX for each of the classical outcome \(x\in X\) in the classical register \(C\) corresponds to a stochastic transformation \(\mathcal{D}(\mathcal{H}_{in})\to\mathcal{L}(\mathcal{H}_{out})\) given by
 \begin{align}\label{e5.5}
   \rho_A\mapsto p(x,\rho,s)\sigma_Q(x,\rho,s),~\forall x\in X.
   \end{align}
   Since, these transformations are physically realizable by observing the classical random variable \(x\in X\) of the Q-DEMUX \(\mathrm{D}_s\), they must correspond to the physically realizable stochastic quantum operations. Therefore, each of these transformations must be linear and completely positive in nature. Moreover, by ignoring the classical index \(x\in X\) we obtain 
   \[\rho_A\mapsto\sum_xp(x,\rho,s)\sigma_Q(x,\rho,s)=:\Lambda_{\text{ind}}(\rho_A),\]
   as the effective induced quantum channel \(\Lambda_{\text{ind}}:\mathcal{D}(\mathcal{H}_{in})\to\mathcal{D}(\mathcal{H}_{out})\), independent of the choice of \(s\).

  Therefore, it follows that the maps  in \eqref{e5.5} define a pair of  instrument realizations 
$\{\Lambda_{s,x}\}_{x\in X}$, $s=0,1$, of the channel $\Lambda_{\text{ind}}$,
such that
\[
\mathrm{D}_s(\rho)=\sum_x \Lambda_{s,x}(\rho)\otimes |x\>\<x|,\qquad s=0,1.
\]
We next show that incorporating the selector as an input, $\mathrm{D}$ can be realized as an instrument
$\mathcal{D}(\mathcal{H}_{in}\otimes \mathbb{C}^2)\to \mathcal D(\mathcal{H}_{out})$, such
that $\mathrm{D}(\rho\otimes |s\>\<s|)=\mathrm{D}_s(\rho)$. Indeed, we can construct a minimal dilation of
such an instrument $\mathrm{D}$ as follows. Let
$(\mathcal{H}_E,V)$ be a minimal Stinespring dilation for $\Lambda_{\text{ind}}$, then the two
instrument realizations $\{\Lambda_{s,x}\}$ have minimal dilations of the form  $(\mathcal{H}_E,V,M_s)$, for
some POVMs $\{M_{s,x}\}_x$, $s=0,1$, on $\mathcal{H}_E$. Put
\[
\mathcal{H}_{E_s}:=\mathcal{H}_E\otimes \mathbb C^2,\qquad  
V_s:=V\otimes I_2, \qquad M_{s,x}:=M_{0,x}\otimes|0\>\<0|+M_{1,x}\otimes|1\>\<1|
\]
and define $\mathrm{D}$ as the instrument corresponding to $(\mathcal{H}_{E_s},V_s,M_s)$. It is then easy to see that the
dilation is minimal and 
\[
\mathrm{D}(\rho\otimes |s\>\<s|)=\sum_x\Tr_{E_s}[V_s(\rho\otimes |s\>\<s|)V^\dagger(I_{out}\otimes
M_{s,x})]\otimes |x\>\<x|=\sum_x \Lambda_{s,x}(\rho)\otimes |x\>\<x|.
\]
  
This implies that the Q-DEMUX \(\mathrm{D}\), with an access to the selector qubit \(|s\>\in\{|0\>,|1\>\}\) can be configured in terms of two of the instrument realizations of the induced quantum channel \(\Lambda_{\text{ind}}\).
\end{proof}

Using the above construction, we can relate a Q-DEMUX to a pair of instrument realizations
of a given channel \(\Lambda_{\text{ind}}\) (adding some zero operators to one of the ancillary POVMs
if the number of outcomes do not match). This will be called a Q-DEMUX realization of the
channel \(\Lambda_{\text{ind}}\).

\subsection{Maximizing the Demultiplexing Strength}
With the sufficiency of identifying the Q-DEMUXs as instrument realizations of quantum
channels for optimal demultiplexing strength, we can now rephrase our question in terms of
the quantum channel induced from a Q-DEMUX. In particular, we will now characterize the
form of the quantum channels for which there exists Q-DEMUX realizations achieving the
maximum value of total demultiplexing strength. We begin with a trivial upper-bound on \(\mathcal{T}_s(\mathrm{D}_s)\).
\begin{lemma}\label{l3}
    For any Q-DEMUX, \(\mathrm{D}_s:\mathcal{D}(\mathcal{H}_{in})\to\mathcal{D}(\mathcal{H}_{out})\times X\), with an access to the selector, the total demultiplexing strength satisfies \(\mathcal{T}_s(\mathrm{D}_s)\leq 2\log_2 d_{in}\).
\end{lemma}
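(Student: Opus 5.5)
The plan is to bound the two terms in $\mathcal{T}_s(\mathrm{D}_s)$ separately for an arbitrary fixed input $\rho_A\in\mathcal{D}(\mathcal{H}_{in})$, show that each is at most $\log_2 d_{in}$, and then take the maximum over $\rho_A$. By Lemma~\ref{l1} we may take $f$ and $g$ to be the identity, so $\mathcal{C}(\mathrm{D}_0,\rho_{RA})=I(Y:X)$ and $\mathcal{Q}(\mathrm{D}_1,\ket{\psi}_{RA})=\max_{\{\mathcal{N}_x\}}I_{\sigma}(R\rangle B)$.

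\emph{Classical term.} The classical register $X$ is obtained by measuring the ensemble $\{p_y,\rho_y\}$ with the induced POVM $\{A^{\Lambda_0}_x\}$ of the instrument $\{\Lambda_{0,x}\}$ from Lemma~\ref{l2}. The Holevo bound then gives
\[
I(Y:X)\le \chi(\{p_y,\rho_y\})=S(\rho_A)-\sum_y p_y S(\rho_y)\le S(\rho_A)\le \log_2 d_{in}.
\]

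\emph{Quantum term.} The corrected state $\sigma_{RB}$ is the image of $\ket{\psi}_{RA}$ under a single channel $A\to B$, namely $\rho\mapsto\sum_x\mathcal{N}_x\circ\Lambda_{1,x}(\rho)$, acting only on $A$. The coherent information satisfies $I(R\rangle B)\le S(\sigma_R)$. To see this, purify $\sigma_{RB}$ to $\sigma_{RBE}$; then $S(B)-S(RB)=S(B)-S(E)\le S(BE)=S(R)$ by subadditivity/Araki--Lieb. The no-signalling condition stated after Eq.~(\ref{e3.5}) gives $\sigma_R=\rho_R$, and $S(\rho_R)=S(\rho_A)\le\log_2 d_{in}$ because $\ket{\psi}_{RA}$ is pure. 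The $\max\{\cdot,0\}$ in the definition does not affect this bound. Hence $\mathcal{Q}\le S(\rho_A)\le \log_2 d_{in}$.

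Summing the two bounds gives $\mathcal{C}+\mathcal{Q}\le 2S(\rho_A)\le 2\log_2 d_{in}$ for every $\rho_A$, and maximizing over $\rho_A$ yields the claim. The only step needing care is the coherent-information bound, in particular recognizing that the classically controlled correction can be absorbed into one CPTP map on $A$, so that the $R$-marginal is unchanged. Once that is seen, everything reduces to standard entropy inequalities, so I do not expect a genuine obstacle.
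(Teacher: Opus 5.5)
Your proposal is correct and follows essentially the same route as the paper: bound the classical and the quantum term separately by $\log_2 d_{in}$ and add them. The paper just decouples the two maximizations and asserts each capacity is at most $\log_2 d_{in}$. You supply the justifications the paper omits: the Holevo bound for the induced POVM, and $I(R\rangle B)\le S(\sigma_R)=S(\rho_A)$ after absorbing the classically controlled corrections into a single channel. This gives the slightly sharper pointwise bound $2S(\rho_A)$.
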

\begin{proof}
    The proof follows from the fact that even with the independent maximization each of the quantum or classical capacities are bounded by \(\log_2 d_{in}\). Precisely,
    \begin{align*}
        \mathcal{T}_s(\mathrm{D}_s)\leq \max_{\sum_yp_y\rho_y\in\mathcal{D}(\mathcal{H}_{in})}\mathcal{C}(\mathrm{D}_0,\rho_{RA})+\max_{\ket{\psi}\in\mathcal{H}_{in}^{\otimes 2}}\mathcal{Q}(\mathrm{D}_1,\ket{\psi}_{RA})\leq 2\log_2 d_{in},
    \end{align*}
    where, \(\sum_yp_y\rho_y=\Tr_R[\rho_{RA}]\) need not to be identical with the marginal \(\Tr_R[\ketbra{\psi}{\psi}]\), in general.
    \end{proof}
    \noindent
Note that while the total demultiplexing strength is defined via maximizing the sum of the classical and quantum capacity,  the upper-limit derived above involves independent maximization of individual capacities. Therefore, it is a natural question whether the upper-bound is practically achievable. In this section, we will show that this is indeed possible. In Theorem \ref{t1} below, we characterize the class of quantum channels that have a  Q-DEMUX realization that achieves the bound. The following propositions are instrumental for this result.

\begin{proposition}\label{p1}
A quantum channel \(\Lambda_{\text{ind}}\) has a Q-DEMUX  realization \(\mathrm{D}_1\) with 
\(\mathcal{Q}(\mathrm{D}_1)=\log_2 d_{\text{in}}\) if and only if \(\Lambda\) is a convex combination of isometries (RI) \(\{\mathcal{V}_k:\mathcal{H}_{in}\to\mathcal{H}_{out}\}_k\).
    \\\noindent
    This also implies that for \(\mathcal{H}_{in}\simeq\mathcal{H}_{out}\), \(\Lambda\) is a random-unitary channel.
\end{proposition}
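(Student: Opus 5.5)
Here $\mathcal{Q}(\mathrm{D}_1)$ means $\mathcal{Q}(\mathrm{D}_1,\ket{\psi}_{RA})$ maximized over inputs. The plan is to reduce the statement to perfect correctability of an instrument conditioned on its outcome. By Lemma \ref{l1} and Lemma \ref{l2}, $\mathrm{D}_1$ is an instrument $\{\Lambda_{1,x}\}_x$ realizing $\Lambda_{\text{ind}}$, with correcting channels $\{\mathcal{N}_x\}_x$. Coherent information of a state on $RB$ with $\dim R=d_{in}$ is at most $\log_2 d_{in}$, and it attains this value iff the reduced state on $R$ is maximally mixed and the state on $RB$ is pure, hence maximally entangled. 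So $\mathcal{Q}(\mathrm{D}_1)=\log_2 d_{in}$ forces the input $\ket{\psi}_{RA}$ to be maximally entangled, and the corrected state
\[
\sum_x (id_R\otimes \mathcal{N}_x\circ\Lambda_{1,x})(\ketbra{\psi}{\psi})
\]
to be a pure maximally entangled state $\ket{\phi}$. A pure state is extremal, so every nonzero term must be proportional to $\ketbra{\phi}{\phi}$. By the Choi isomorphism, each $\mathcal{N}_x\circ\Lambda_{1,x}$ is then $p_x$ times a fixed isometric (unitary-onto-image) channel $\mathcal{W}$, with $p_x=\Tr[A_x]$-type weights. Composing the $\mathcal{N}_x$ with $\mathcal{W}^{-1}$, we may take $\mathcal{W}=id$.

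\emph{Only if.} Here $\mathcal{N}_x\circ\Lambda_{1,x}=p_x\, id$. First, $\Tr\Lambda_{1,x}(\rho)=p_x$ for all $\rho$, so $A_x=p_x I$. Hence $\tilde\Lambda_x:=\Lambda_{1,x}/p_x$ is a channel perfectly reversed by $\mathcal{N}_x$. A channel with a left inverse that is the identity is isometric on Kraus level. The standard Knill--Laflamme argument gives $K_i^\dagger K_j=c_{ij}I$, and diagonalizing $(c_{ij})$ yields Kraus operators $\sqrt{\lambda_k}V_k$, where the $V_k$ are isometries with mutually orthogonal ranges. Thus $\tilde\Lambda_x=\sum_k\lambda_k\mathcal{V}_k$ is a convex combination of isometric channels, and so is $\Lambda_{\text{ind}}=\sum_x p_x\tilde\Lambda_x$.

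\emph{If.} Given $\Lambda=\sum_k q_k\mathcal{V}_k$, define the instrument $\Lambda_{1,k}=q_k\mathcal{V}_k$; this is a valid instrument realization of $\Lambda$. Take $\mathcal{N}_k$ to be the reversal of $\mathcal{V}_k$, namely $\mathcal{N}_k(\sigma)=V_k^\dagger\sigma V_k+\Tr[(I-V_kV_k^\dagger)\sigma]\,\omega$. Then the corrected state is exactly the maximally entangled input, which gives $\log_2 d_{in}$.

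For $\mathcal{H}_{in}\simeq\mathcal{H}_{out}$, isometries are unitaries, so $\Lambda$ is a random-unitary channel.

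The main obstacle is the extremality/Choi step: showing that each branch must individually equal $p_x$ times the same isometric channel, and then the Knill--Laflamme decomposition into orthogonal-range isometries. Everything else is a direct construction.
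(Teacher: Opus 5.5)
Your ``if'' direction is the same as the paper's. The ``only if'' direction has one false step. You claim that $I_\sigma(R\rangle B)=\log_2 d_{in}$ with $\dim R=d_{in}$ forces $\sigma_{RB}$ to be pure. This holds only when $\dim B\le d_{in}$, because then $S(B)\le\log_2 d_{in}$ and $S(RB)\ge 0$ must both be saturated. The correcting maps $\mathcal{N}_x:\mathcal{D}(\mathcal{H}_Q)\to\mathcal{D}(\mathcal{H}_B)$ have an unrestricted output space; the paper itself takes $\mathcal{H}_B=\mathcal{H}_A\oplus\mathcal{H}_{A'}$ in Proposition~\ref{p5}. For example, $\sigma_{RB}=\phi^+_{RB_1}\otimes\tau_{B_2}$ with $\tau$ maximally mixed has $S(B)-S(RB)=\log_2 d_{in}$ but is mixed. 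Your extremality step needs purity: you use that positive terms summing to a rank-one operator are each proportional to it. So the argument does not go through as written.

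The repair is the equality analysis the paper performs.
\begin{itemize}
\item Purify $\sigma_{RB}$ to $\varphi_{RBE}$. Then $\log_2 d_{in}=S(R|E)\le S(R)\le \log_2 d_{in}$ forces $\sigma_R$ to be maximally mixed and $\varphi_{RE}=\varphi_R\otimes\varphi_E$.
\item By uniqueness of purifications, $\sigma_{RB}=(I_R\otimes W)(\phi^+_{RB_1}\otimes\omega_{B_2})(I_R\otimes W^\dagger)$ for some isometry $W$.
\item Post-compose every $\mathcal{N}_x$ with the fixed channel that undoes $W$ and discards $B_2$. The new corrections output the pure state $\phi^+_{RB_1}$.
\end{itemize}
From there your extremality, Choi-isomorphism and Knill--Laflamme steps are correct. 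They also show that every $A_x\propto I$ and that each normalized branch is a mixture of isometries with orthogonal ranges.

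With this fix your route differs from the paper's. The paper uses convexity of the coherent information and data processing to push the value $\log_2 d_{in}$ down to each \emph{uncorrected} branch $\sigma^x_{RQ}$. It applies the same equality analysis there to get $\sigma^x_{RQ}=(I\otimes U^x)(\phi^+\otimes\varphi^x_{E'})(I\otimes (U^x)^\dagger)$, and reads the isometries off the spectral decomposition of $\varphi^x_{E'}$, so it never needs Knill--Laflamme. You work with corrected states and obtain the decomposition from the Knill--Laflamme conditions instead. Both arguments assume, as the paper does, that the maxima defining $\mathcal{Q}(\mathrm{D}_1)$ are attained.
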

\begin{proof}
 To prove the \textit{if} part, let \(\Lambda\) be a random isometric channel \[\Lambda(\rho)=\sum_kp_k\mathcal{V}_k\rho\mathcal{V}_k^{\dagger},~\forall\rho\in\mathcal{D}(\mathcal{H}_{in}),\]
where, each of the \(\mathcal{V}_k:\mathcal{H}_{in}\to\mathcal{H}_{out}\) is an isometry appearing with  probability \(p_k\). Consider the instrument realization \(\mathcal{I}(\Omega=[k],\mathcal{H}_{in},\mathcal{H}_{out})\) of \(\Lambda\) such that 
\begin{equation}\label{eq:instrument_RI}
\mathcal{I}_x(\rho)=p_x\mathcal{V}_x\rho\mathcal{V}_x^\dagger,\quad    \forall x\in[k].
\end{equation}
Then we can construct a Q-DEMUX \(\mathrm{D}_1\) such that with the selector choice \(s=1\),
the classical register stores the random variable \(x\in[k]\) and the corresponding output state at the quantum port will be \(\mathcal{V}_x\rho\mathcal{V}_x^{\dagger}\). Additionally, let \(P_x\) be the projections on \(\mathcal{H}_{\text{out}}\) given by
\(P_x:=\mathcal{V}_x\mathcal{V}_x^\dagger\) and define the correcting operations as
\[
\mathcal{N}_x(\sigma)=\mathcal{V}_x^{\dagger}\sigma \mathcal{V}_x+ \Tr[\sigma
(I-P_x)]\omega,
\] 
where $\omega$ is any state on $\mathcal{H}_{\text{out}}$.  Then,  applying the corrections depending upon the classical random variable \(x\) stored in \(C\), we obtain 
\[
\rho\mapsto \sum_x p_x \mathcal{N}_x(\mathcal{V}_x\rho\mathcal{V}_x)=\rho.
\]
 If Alice sends the part of a maximally entangled state
 \(\ket{\phi^+}=\frac1{\sqrt{d_{in}}}\sum_{i=0}^{d_{in}-1}\ket{ii}_{RA}\) through the
 Q-DEMUX, then  \(\mathcal{Q}(\mathrm{D}_1,\ket{\phi^+})=\log_2d_{in}\) attains the optimal value.

To prove the \textit{only if}-part, consider a Q-DEMUX \(\mathrm{D}_1\) with induced
channel \(\Lambda_{\text{ind}}\), such that \(\mathcal{Q}(\mathrm{D}_1)=\log_2d_{\text{in}}\).
Let 
\(\mathcal{I}(\Omega,\mathcal{H}_{in},\mathcal{H}_{out})\) given by the CPTNI maps
\(\{\mathcal{I}_x\}_{x\in\Omega}\), be the instrument corresponding to \(\mathrm{D}_1\). Then there exist an entangled state \(\ket{\psi}_{RA}\in\mathcal{H}_{in}^{\otimes 2}\), and a set of correction operations \(\{\mathcal{N}_x\}_x\) such that 
\[
\mathcal{Q}(\mathrm{D}_1,\ket{\psi}_{RA})=I_{\sigma}(R\rangle Q)=\log_2(d_{\text{in}}),
\text{ where }
\sigma_{RQ}=\sum_x (id_R\otimes
\mathcal{N}_x\circ\mathcal{I}_x)(\ketbra{\psi}{\psi}_{RA}).
\]
By the convexity of the coherent information, and since no local operation can increase the
coherent information of a quantum state, we obtain that for every CP-map \(\mathcal{I}_x\),
    \begin{align*}
        I_{\sigma^x_{RQ}}(R\rangle Q)=S_{\sigma^x}(Q)-S_{\sigma^x}(RQ)=\log_2
	d_{in},
    \text{ where
	}\sigma^x_{RQ}=\frac{(id\otimes\mathcal{I}_x)(\ketbra{\psi}{\psi}_{RA})}{\Tr[(id\otimes\mathcal{I}_x)(\ketbra{\psi}{\psi}_{RA})]}.
    \end{align*}
To use a standard argument, let $\varphi^x_{RQE}$ be a purification of $\sigma^x_{RQ}$, then
we have
\[
\log_2d_{in}=I_{\sigma^x_{RQ}}(R\rangle
Q)=S_{\varphi^x}(RE)-S_{\varphi^x}(E)=S_{\varphi^x}(R|E)\le S_{\varphi^x}(R)\le \log_2
d_R\le \log_2 d_{in},
\]
so that all inequalities are equalities. It follows that  \(d_A=d_{in}=d_R\) and
\(\varphi^x_R=\sigma^x_R\) is
the maximally mixed state, hence the input quantum state \(\ket{\psi}_{RA}\) must be a
maximally entangled state (say \(\ket{\phi^+}_{RA}\)) in \(\mathcal{H}_{in}^{\otimes 2}\).
Moreover, the equality \(S_{\varphi^x}(R|E)= S_{\varphi^x}(R)\) implies that
\(\varphi^x_{RE}=\varphi^x_R\otimes \varphi^x_E\), so that the purification
\(\varphi^x_{RQE}\) has the form
\[
\varphi^x_{RQE}=(I_{R}\otimes U^x\otimes I_E)(\ketbra{\phi^+}{\phi^+}_{RA}\otimes
\varphi^x_{E'E})(I_{R}\otimes (U^x)^\dagger\otimes I_E)
\]
for some purification \(\varphi^x_{E'E}\) of \(\varphi^x_E\) and a unitary \(U^x:
\mathcal{H}_{AE'}\to \mathcal{H}_Q\). Taking the partial trace over $E$, we obtain
\[
\sigma^x_{RQ}=\varphi^x_{RQ}=(I_{R}\otimes U^x)(\ketbra{\phi^+}{\phi^+}_{RA}\otimes
\varphi^x_{E'})(I_{R}\otimes (U^x)^\dagger).
\]
Let \(\varphi^x_{E'}=\sum_l \lambda^x_l\ketbra{\xi^x_l}{\xi_l^x}\) be the spectral
decomposition, and let
$p_x:=\Tr[(id\otimes\mathcal{I}_x)(\ketbra{\psi}{\psi}_{RA})]$.  Putting all together, we obtain that
\[
(id\otimes \Lambda_{\text{ind}})(\ketbra{\phi^+}{\phi^+}_{RA})=(id\otimes
\Lambda_{\text{ind}})(\ketbra{\psi}{\psi}_{RA})=\sum_xp_x\sigma^x_{RQ}=\sum_{x,l} p_x\lambda^x_l (I_{R}\otimes
\mathcal{V}_l^x)(\ketbra{\phi^+}{\phi^+}_{RA})(I_{R}\otimes (\mathcal{V}_l^x)^\dagger),
\]
where \(\mathcal{V}^x_l\) is an isometry \(\mathcal{H}_A\to\mathcal{H}_Q\) acting as
\[
\mathcal{V}^x_l\ket{\xi}_A=U^x(\ket{\xi}_A\otimes \ket{\xi^x_l}_{E'}),\qquad \ket{\xi}\in
\mathcal{H}_A.
\]
By uniqueness of the Choi representation, the induced channel 
\(\Lambda=\sum_{x,l}p_x\lambda^x_l\mathcal{V}^x_l\cdot
(\mathcal{V}^x_l)^\dagger\) is a convex combination of isometric channels.
\end{proof}\noindent
The above proposition can also be seen as the consequence of error correction for quantum channels \cite{gregoratti2004quantum,nayak2006invertible}.\\
\noindent
For perfect demultiplexing of the classical information, the information encoded in the
input quantum state must be stored in the classical register. In the following, we will
identify quantum channels that have a Q-DEMUX implementation attaining maximal classical capacity.

\begin{proposition}\label{p2} A quantum channel \(\Lambda_{\text{ind}}\) has a Q-DEMUX realization
\(\mathrm{D}_0\) with  \(\mathcal{C}(\mathrm{D}_0)=\log_2 d_{\text{in}}\) if and only if \(\Lambda\) is a classical-quantum (C-Q) channel.
\end{proposition}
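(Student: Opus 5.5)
For the \emph{if} direction we exhibit a measure-and-prepare instrument realization. A classical-quantum channel has the form
\[
\Lambda(\rho)=\sum_{i=0}^{d_{in}-1}\bra{i}\rho\ket{i}\,\omega_i
\]
for some orthonormal basis $\{\ket{i}\}$ of $\mathcal{H}_{in}$ and states $\omega_i\in\mathcal{D}(\mathcal{H}_{out})$. Take $X=\{0,\dots,d_{in}-1\}$ and the instrument
\[
\Lambda_{0,x}(\rho)=\bra{x}\rho\ket{x}\,\omega_x .
\]
Its outcomes sum to $\Lambda$, so by Lemma~\ref{l2} it is a legitimate $\mathrm{D}_0$. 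We then encode $y\in\{0,\dots,d_{in}-1\}$ uniformly as $\rho_y=\ketbra{y}{y}$. This gives $p(x,\rho_y,0)=\delta_{xy}$, and with $f$ the identity (Lemma~\ref{l1}) we get $I(Y:X)=\log_2 d_{in}$.

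For the \emph{only if} direction, let $\{\Lambda_{0,x}\}_x$ be the instrument of $\mathrm{D}_0$ and $A_x=\Lambda_{0,x}^*(I_{out})$ its induced POVM. Assume some ensemble $\{p_y,\rho_y\}$ achieves $I(Y:X)=\log_2 d_{in}$. By Holevo's bound, $I(Y:X)\le\chi\le S(\rho)\le\log_2 d_{in}$, so every inequality must be tight. From these equalities we plan to extract the following.
\begin{itemize}
\item $\rho=\sum_y p_y\rho_y$ is maximally mixed.
\item Each $\rho_y$ is pure, and the $\rho_y$ are mutually orthogonal, so exactly $d_{in}$ of them, $\ket{e_y}$, carry positive weight with $p_y=1/d_{in}$. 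This follows since $H(Y)\ge I=\log_2 d_{in}$ and the Holevo quantity equals $H(Y)$ only for orthogonal supports.
\item $H(Y|X)=0$: each outcome $x$ occurs only for a single $y(x)$.
\end{itemize}
Hence $\bra{e_y}A_x\ket{e_y}=0$ whenever $y\neq y(x)$, and positivity of $A_x$ forces $A_x\ket{e_y}=0$. So each $A_x$ is supported on $\mathrm{span}\{\ket{e_{y(x)}}\}$, and $\sum_x A_x=I$ gives
\[
\sum_{x:\,y(x)=y}A_x=\ketbra{e_y}{e_y}.
\]
The POVM is therefore a coarse-graining-refinement of the projective measurement in the basis $\{\ket{e_y}\}$.

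It remains to show the instrument itself is measure-and-prepare. Let $K_{x,j}$ be Kraus operators of $\Lambda_{0,x}$. Since $\sum_j K_{x,j}^\dagger K_{x,j}=A_x$ is supported on $\ket{e_{y(x)}}$, we have $K_{x,j}\ket{e_y}=0$ for $y\ne y(x)$, so $K_{x,j}=\ket{\eta_{x,j}}\bra{e_{y(x)}}$ for some vectors $\ket{\eta_{x,j}}$. Consequently
\[
\Lambda_{0,x}(\rho)=\bra{e_{y(x)}}\rho\ket{e_{y(x)}}\,\tau_x,
\]
with $\tau_x$ positive. Summing over $x$ yields
\[
\Lambda(\rho)=\sum_y\bra{e_y}\rho\ket{e_y}\,\omega_y,
\qquad
\omega_y=\sum_{x:\,y(x)=y}\tau_x .
\]
Trace preservation forces each $\omega_y$ to be a state, so $\Lambda$ is a C-Q channel.

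The main obstacle is the equality analysis in the Holevo chain. One must justify that $I(Y:X)=\log_2 d_{in}$ forces the $\rho_y$ to be pure and orthogonal, and that the ensemble may be taken with exactly $d_{in}$ nonzero elements. Rather than relying on the Holevo equality conditions, we will argue directly: $I(Y:X)=H(X)-H(X|Y)=\log_2 d_{in}$ together with $I\le H(Y)$ shows that the measurement perfectly distinguishes the $\rho_y$. Perfect distinguishability gives mutually orthogonal supports. At least $d_{in}$ of these supports must be nonempty since $I\le\log_2(\#\text{ supports})$, while at most $d_{in}$ nonempty orthogonal supports fit in a $d_{in}$-dimensional space. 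So there are exactly $d_{in}$ one-dimensional supports, spanning $\mathcal{H}_{in}$.
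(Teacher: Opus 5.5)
Your \emph{if} direction coincides with the paper's. Your final Kraus-operator step is a valid substitute for the paper's dual-map argument ($\mathcal{I}_x^*(B)=\Tr[\sigma_xB]\ketbra{x}{x}$). Allowing $A_x=a_x\ketbra{e_{k(x)}}{e_{k(x)}}$ with more than $d_{in}$ outcomes is more careful than the paper.

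The gap is exactly the step you call the main obstacle, and your ``direct'' resolution does not work. From $I(Y:X)=\log_2 d_{in}$ and $I\le H(Y)$ you only get $H(Y)\ge\log_2 d_{in}$. Perfect distinguishability means $H(Y|X)=0$, i.e.\ $I=H(Y)$, which would need $H(Y)=\log_2 d_{in}$. Nothing forces that, because the ensemble in $\mathcal{C}$ is arbitrary. Your bullets are in fact false as stated. Use each basis vector $\ket{k}$ twice, each copy with weight $1/(2d_{in})$, and measure in $\{\ket{k}\}$. Then $I(Y:X)=\log_2 d_{in}$, yet $H(Y|X)=1$, the $\rho_y$ are not mutually orthogonal, and $2d_{in}$ of them carry weight. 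For the same reason, ``the Holevo quantity equals $H(Y)$ only for orthogonal supports'' cannot be applied, since you only know $\chi=\log_2 d_{in}\le H(Y)$. (The paper's proof simply asserts the existence of $d_{in}$ states with $\Tr[A_x\rho_y]=\delta_{x,y}$, so it passes over the same point.)

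What the Holevo chain does give you, using only the inequality and not its equality conditions, is $\rho=I/d_{in}$ and $\rho_y=\ketbra{\psi_y}{\psi_y}$ pure for $p_y>0$. That is enough for a direct argument. Since $p(x)=\Tr[A_x]/d_{in}$,
\[
I(Y:X)=\sum_{x,y}p_y\bra{\psi_y}A_x\ket{\psi_y}\log_2\frac{d_{in}\bra{\psi_y}A_x\ket{\psi_y}}{\Tr A_x}.
\]
Because $\bra{\psi_y}A_x\ket{\psi_y}\le\|A_x\|\le\Tr A_x$, every logarithm is at most $\log_2 d_{in}$. The logarithms are averaged against $p(x,y)$, so equality forces $A_x=\Tr(A_x)\ketbra{\psi_y}{\psi_y}$ whenever $p_y\bra{\psi_y}A_x\ket{\psi_y}>0$.

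Each ensemble vector admits such an $x$. Each nonzero $A_x$ admits such a $y$, because $\Tr[A_x\rho]=\Tr A_x/d_{in}>0$. It follows that:
\begin{itemize}
\item ensemble vectors differing by more than a phase are orthogonal;
\item they form an orthonormal basis $\{\ket{e_k}\}$, since $\rho=I/d_{in}$;
\item every nonzero $A_x$ is a multiple of some $\ketbra{e_{k(x)}}{e_{k(x)}}$.
\end{itemize}
Replacing your label $y(x)$ by $k(x)$, the rest of your proof goes through unchanged. Alternatively, you could merge identical $\rho_y$, which leaves $I(Y:X)$ unchanged since $X$ depends on $Y$ only through $\rho_Y$. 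You would then invoke the commutativity condition for equality in Holevo's bound, but that is the result you wanted to avoid.
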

\begin{proof}
To prove the \textit{if}-part, let us consider a classical-quantum channel \(\Lambda\), involving a projective measurement \(\{\ketbra{k}{k}\}_k\) on the input quantum state and preparing the corresponding output quantum states \(\{\sigma_k\in \mathcal{D}(\mathcal{H}_{out})\}_k\). That is
\[\Lambda(\rho)=\sum_k\langle k|\rho|k\rangle\sigma_k,\quad\forall\rho\in\mathcal{D}(\mathcal{H}_{in}).\]
Now one can simply consider a quantum instrument \(\mathcal{I}(\Omega=[k], \mathcal{H}_{in},\mathcal{H}_{out})\),  where the  CPTNI operations and the induced POVM are given as
\begin{equation}\label{eq:instrument_CQ}
\mathcal{I}_x=\bra{x}\cdot \ket{x}\sigma_x,\qquad 
\mathcal{A}_x^{\mathcal{I}}\equiv\ketbra{x}{x},\qquad  x\in [k].
\end{equation}
Accordingly, if Alice encodes the input classical random variable \(y\in Y\) in the orthogonal set of quantum states \(\{\ketbra{y}{y}\in\mathcal{D}(\mathcal{H}_{in})\}_y\), then
\(p(Y=y|X=x)=\Tr[\mathcal{A}_x^{\mathcal{I}}\ketbra{y}{y}]=\delta_{x,y}\). 
Now, maximizing over the input probability distribution \(\{p_y\}_y\), Alice is able to communicate \(\log_2d_{in}\)-bit of classical information for the completely random one, i.e., \(p_y=\frac1{d_{in}},~\forall y\in Y\).
    
To prove the \textit{only if}-part, let us consider a quantum instrument \(\mathcal{I}(\Omega,\mathcal{H}_{in},\mathcal{H}_{out})\) with the induced POVM \(\{\mathcal{A}^{\mathcal{I}}_x\}_{x\in\Omega}\) and induced CP-maps \(\{\mathcal{I}_x\}_{x\in\Omega}\) achieving the optimal classical capacity. In other words, there exists a a set of \(d_{in}\) quantum states \(\{\rho_y\in\mathcal{D}(\mathcal{H}_{in})\}_y\) encoded with the classical random variable \(y\in Y\), such that 
\[\Tr[\mathcal{A}^{\mathcal{I}}_x\rho_y]=\delta_{x,y}.\]
This immediately implies, \(\{\rho_y=\ketbra{y}{y}\}_y\) forms an orthogonal basis for \(\mathcal{H}_{in}\) and \(\{\mathcal{A}^{\mathcal{I}}_x\}_x\) is equivalent to the \(d_{in}\) outcome projective measurement \(\{\ketbra{x}{x}\in\mathcal{D}(\mathcal{H}_{in})\}_x\).  Since 
\(\mathcal{I}_x^*(I_{\text{out}})=\mathcal{A}^{\mathcal{I}}_x=\ketbra{x}{x}\) and \(\mathcal{I}_x^*\) is positive,
we have
\[
\mathcal{I}^*_x(B)=\omega_x(B)\ketbra{x}{x},\qquad \forall B\in
\mathcal{L}(\mathcal{H}_{\text{out}}),\quad  0\le
B\le I_{\text{out}}
\]
for some \(\omega_x(B)\in [0,1]\). By linearity of \(\mathcal{I}_x^*\), the map
\(B\mapsto\omega_x(B)\)  from the set of effects \(\{B\in
\mathcal{L}(\mathcal{H}_{\text{out}})\ |\ 0\le B\le I_{\text{out}}\}\) into the interval \([0,1]\)
preserves convex combinations, and therefore  it extends to a
positive linear functional over \(\mathcal{L}(\mathcal{H}_{\text{out}})\), moreover,
\(\omega_x(I_{\text{out}})=1\). It follows that there is some density matrix \(\sigma_x\in
\mathcal{D}(\mathcal{H}_{\text{out}})\) such that \(\omega_x(B)=\Tr[\sigma_xB]\),
\(\forall B\in \mathcal{L}(\mathcal{H}_{\text{out}})\). 
This implies that 
\[
\mathcal{I}_x(\rho)=\bra{x}\rho\ket{x}\sigma_x,\qquad \rho\in
\mathcal{D}(\mathcal{H}_{\text{in}}),\ x\in \Omega,
\]
so that the induced channel \(\Lambda=\sum_x \mathcal{I}_x\) is
classical-quantum.
This completes the proof.
\end{proof}

Using Proposition \ref{p1} and Proposition \ref{p2}, we can now conclude that for a
certain class of quantum channels, Q-DEMUX realizations exist such that the total
demultiplexing strength attains the maximal possible value.

\begin{theorem}\label{t1}
   A quantum channel \(\Lambda_{\text{ind}}\) admits a  A Q-DEMUX realization \(\mathrm{D}_s\) with
   \(\mathcal{T}_s(\mathrm{D}_s)=2\log_2d_{in}\), if and only if \(\Lambda_{\text{ind}}\) is both
   classical-quantum and a random isometry (CQ-RI).
\end{theorem}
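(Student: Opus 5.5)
The plan is to reduce the theorem to Propositions \ref{p1} and \ref{p2}, using Lemma \ref{l2} to identify $\mathrm{D}_0$ and $\mathrm{D}_1$ as two instrument realizations of the same induced channel $\Lambda_{\text{ind}}$.

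For the \emph{only if} direction, assume $\mathcal{T}_s(\mathrm{D}_s)=2\log_2 d_{in}$. We know $\mathcal{C}(\mathrm{D}_0,\rho_{RA})\le\log_2 d_{in}$ and $\mathcal{Q}(\mathrm{D}_1,\ket{\psi}_{RA})\le\log_2 d_{in}$ for every input (Lemma \ref{l3}). Hence the maximizing $\rho_A$ must saturate both terms simultaneously. Then:
\begin{itemize}
\item $\mathcal{Q}(\mathrm{D}_1)=\log_2 d_{in}$, so by Proposition \ref{p1} the channel $\Lambda_{\text{ind}}$ is a random isometry.
\item $\mathcal{C}(\mathrm{D}_0)=\log_2 d_{in}$, so by Proposition \ref{p2} the same channel is classical-quantum.
\end{itemize}
Since the induced channel does not depend on $s$, it is both, i.e.\ CQ-RI.

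For the \emph{if} direction, let $\Lambda_{\text{ind}}$ be CQ-RI. I will take $\mathrm{D}_0$ to be the instrument \eqref{eq:instrument_CQ} built from the CQ form $\Lambda(\rho)=\sum_k\bra{k}\rho\ket{k}\sigma_k$. I will take $\mathrm{D}_1$ to be the instrument \eqref{eq:instrument_RI} built from the RI form $\Lambda(\rho)=\sum_k p_k\mathcal{V}_k\rho\mathcal{V}_k^\dagger$. Both realize the same $\Lambda_{\text{ind}}$, so by the construction in the proof of Lemma \ref{l2} they combine into a legitimate Q-DEMUX $\mathrm{D}_s$, padding with zero operators if the outcome sets differ.

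The essential remaining point, which I expect to be the main obstacle, is that \eqref{eq:instrument_CQ} and \eqref{eq:instrument_RI} must reach their optima \emph{on the same marginal $\rho_A$}, since $\mathcal{T}_s$ is a maximum over a single input state.
\begin{itemize}
\item In the proof of Proposition \ref{p2}, the classical optimum uses the uniform ensemble over the basis $\{\ket{y}\}$. Its average state is $\rho_A=I/d_{in}$, and it yields $I(Y:X)=\log_2 d_{in}$.
\item In the proof of Proposition \ref{p1}, the quantum optimum uses $\ket{\phi^+}_{RA}$. Its marginal is also $I/d_{in}$, and the corrections $\mathcal{N}_x$ recover it perfectly, giving coherent information $\log_2 d_{in}$.
\end{itemize}
Both optima are therefore attained at the maximally mixed input, so $\mathcal{T}_s(\mathrm{D}_s)\ge 2\log_2 d_{in}$. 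Together with Lemma \ref{l3} this gives equality.

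One consistency check is needed. The orthonormal basis from the CQ structure need not be the computational basis, so I will choose the classical ensemble in whatever basis $\{\ket{k}\}$ defines the CQ measurement. Its uniform mixture is still $I/d_{in}$, so the argument is basis-independent.
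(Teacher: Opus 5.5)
Your proposal is correct and follows the paper's proof essentially step for step. For the \emph{only if} direction you reduce to Propositions~\ref{p1} and~\ref{p2}, and you spell out why both terms must saturate, which the paper leaves as ``directly follows''. For the \emph{if} direction you assign the instruments \eqref{eq:instrument_CQ} and \eqref{eq:instrument_RI} to $s=0$ and $s=1$, using that the uniform ensemble in the C-Q basis and $\ket{\phi^+}$ share the maximally mixed marginal, and close with Lemma~\ref{l3}.
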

\begin{proof}
The \textit{only if}-part directly follows from the above two propositions. For the
converse,  consider the quantum channel of the form
\[\Lambda_{\text{ind}}(\rho)=\sum_kp_k\mathcal{V}_k\rho\mathcal{V}_k^{\dagger}=\sum_j\langle j|\rho|j\rangle\sigma_j,\]
    where \(\{p_k\}_k\) is a probability distribution, for every \(k\), \(\mathcal{V}_k:\mathcal{H}_{in}\to\mathcal{H}_{out}\) is an isometry and \(\{\ket{j}\}_j\) forms an orthonormal basis for \(\mathcal{H}_{in}\), with \(\sigma_j\in\mathcal{D}(\mathcal{H}_{out})\) for every \(j\). 
Consider the instrument realizations in the proofs of  Proposition \ref{p1} resp. Proposition \ref{p2} and let us construct the
Q-DEMUX such that these instruments correspond to the selector choices \(s=1\) resp.\
\(s=0\). Then  we know that with the maximally entangled state \(\ket{\phi^+}\), we have 
 \(\mathcal{Q}(\mathrm{D}_1,\ket{\phi^+}_{RA})=\log_2d_A\), and  with the CQ-state
\(\rho_{RA}=\frac{1}{d_A}\sum_{y=0}^{d_A-1}\ketbra{y}{y}\otimes\ketbra{y}{y}\), we have
\(\mathcal{C}(\mathrm{D}_0,\rho_{RA})=\log_2d_A\). Since \(\rho_A=\phi^+_A\) is the
maximally mixed state, we obtain
\[
\mathcal{T}_s(\mathrm{D}_s)=\mathcal{C}(\mathrm{D}_0,\rho_{RA})+\mathcal{Q}(\mathrm{D}_1,\phi^+_{RA})=2\log_2d_A.
\]

\end{proof}

For the sake of completeness, we will write down an isometric extension of a Q-DEMUX
\(\mathrm{D}_s\) with
maximal total demultiplexing strength. As we have seen, the induced
channel of \(\mathrm{D}_s\) must have the form 
\[
\Lambda_{\text{ind}}(\rho)=\sum_kp_k\mathcal{V}_k\rho\mathcal{V}_k^{\dagger}=\sum_j\langle j|\rho|j\rangle\sigma_j,
\]
as in Theorem \ref{t1}. The C-Q form of \(\Lambda_{\text{ind}}\) suggests  a minimal
Stinespring dilation with an isometry of the form
\begin{equation}\label{eq:mst}
\mathcal{V}_{\Lambda}^{CQ}\ket{\psi}_A=\sum_jc_j\ket{\sigma_j}_{QC_{1,j}}\otimes\ket{j}_{C_2},\qquad \forall \ket{\psi}=\sum_ic_i\ket{i}\in\mathcal{H}_{in},
\end{equation}
where \(\ket{\sigma_j}\) is a  purification for the quantum state
\(\sigma_j\) with ancillary space \(\mathcal{H}_{C_{1,j}}\) such that
\(d_{C_{1,j}}=\mathrm{rank}(\sigma_j)\), and \(d_{C_2}=d_A\). The random isometry form of
the channel gives an isometric extension
\[
\mathcal{V}_{\Lambda}^{RI}\ket{\psi}_A=\sum_k\sqrt{p_k}\mathcal{V}_k\ket{\psi}_Q\otimes\ket{k}_C,
\]
where the dimension \(d_C\) is the number of isometries \(\mathcal{V}_k\) appearing in the decomposition of
\(\Lambda_{\text{ind}}\) with nonzero probability \(p_k\). By minimality of the 
Stinespring dilation, we have \(d_C\ge \sum_j d_{C_{1,j}}\), and there is an isometry \(U:
\oplus_j \mathcal{H}_{C_{1,j}}\to \mathcal{H}_C\) such that 
\[
(I_{Q}\otimes U)\mathcal{V}^{CQ}=\mathcal{V}^{RI}.
\]
 We may then define the isometric dilation of the Q-DEMUX as 
\((\mathcal{H}_C\otimes \mathbb{C}^2, \mathcal{V}^{CQ}\otimes I_2, \{M_k\}_k)\),
where
\[
M_k= N_k\otimes \ketbra{0}{0}+ U^\dagger\ketbra{k}{k}U\otimes\ketbra{1}{1},
\]
with \(N_k=I_{C_{1,k}}\otimes \ketbra{k}{k}\) for \(k=0,\dots, d_A-1\) and
\(N_k=0\) for \(k\ge d_A\).

Observe that, as we have seen in the above paragraph,  the number of isometries in the induced channel of a
Q-DEMUX with maximal strength is always lower bounded by the sum of the ranks of the
densities \(\sigma_j\) in its C-Q form (which is easily seen to be equal to the Choi rank
of the induced channel). In  the example below, and for pure-perfect
Q-DEMUXs treated in the next paragraph, this lower bound is attained.  

\begin{example}(Depolarizing channel) An example of a channel accepting a Q-DEMUX
realization with maximal strength is \(\Lambda(\rho)=\Tr[\rho]\tau_A\), where
\(\tau_A=\frac{1}{d_A}I_A\) is the maximally mixed state.
Indeed,
\[
\Lambda(\rho)=\sum_j\bra{j}\rho\ket{j} \tau_A=\frac{1}{d_A^2}\sum_{k=0}^{d_A^2-1}
\mathcal{W}_k\rho\mathcal{W}_k^\dagger,
\]
where \(\mathcal{W}_k\), \(k=0,\dots,d^2_A-1\) are generalized Pauli operators
\[
\mathcal{W}_{m+d_An}=X^mZ^n,\quad m,n=0,\dots,d_A-1,\qquad
X\ket{j}=\ket{j+1\,(\mathrm{mod}\, d_A)},\
Z\ket{j}=e^{i2\pi j/d_A}\ket{j}.
\]
Note that in this case, the number of isometries $d_A^2$ is the same as the Choi rank of
\(\Lambda\).

\end{example}

It is important to note that the form of the quantum channel, in itself, is not sufficient to attain the maximum demultiplexing strength for the Q-DEMUX. Rather, the Q-DEMUX with the choice of the selector \(s\), should be able to implement a pair of specific instrument realizations for the induced quantum channel. In particular, with a CQ-RI induced quantum channel, the Q-DEMUX must admit its CQ instrument realization as the minimal dilation one, which can be used with \(s=0\) to process the classical information through it. On the other hand, in the question of quantum information transmission, by setting the selector \(s=1\), the sender can apply a specific unitary on the ancillary quantum system to change it to the RI instrument realization of the induced quantum channel.  At this end, we provide an simple circuit realization for a class of perfect Q-DEMUXs.

\subsection{Circuit Realizations for Pure-Perfect Q-DEMUXs}
While Theorem \ref{t1} provides a schematic and theoretical modeling for the perfect Q-DEMUXs, it does not specify the exact dimension of ancillary quantum systems to simulate it in a quantum circuit model. We will now consider a specific class of such perfect Q-DEMUXs, for which the ancillary quantum system can be characterized in terms of the dimension of the output and the input Hilbert spaces.  
\begin{definition}\label{d1}
    A Q-DEMUX \(\mathrm{D}_s:\mathcal{D}(\mathcal{H}_{in})_A\to\mathcal{D}(\mathcal{H}_{out})_Q\times X_C\) is said to be perfect and pure if \(\mathcal{T}_s(\mathrm{D}_s)=2\log_2d_{in}\) and there exists an orthonormal basis \(\{\ket{i}\}_i\in\mathcal{H}_{in}\) for which each of the states \(\{\Lambda_{\text{ind}}(\ketbra{i}{i})\}_i\) is pure.
\end{definition}
Before going to the circuit realization, we will highlight a specific feature of the pure-perfect Q-DEMUXs, which will be instrumental for the rest.
\begin{lemma}\label{l4}
    The induced quantum channel \(\Lambda_{\text{ind}}\) for any pure-perfect Q-DEMUX admits a realization with exactly \(d_{in}\) numbers of isometries, each with probability \(\frac1{d_{in}}\).
\end{lemma}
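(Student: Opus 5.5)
The plan is to combine Theorem \ref{t1} with the purity condition in Definition \ref{d1}. First I pin the induced channel down to a classical-quantum form with pure output states. Then I use the random-isometry property to show that these pure states are mutually orthogonal. Finally I write down an explicit family of $d_{in}$ equiprobable isometries.

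\textbf{Step 1: the C-Q states are pure.} Since $\mathcal{T}_s(\mathrm{D}_s)=2\log_2 d_{in}$, Theorem \ref{t1} says $\Lambda_{\text{ind}}$ is CQ-RI. So there is an orthonormal basis $\{\ket{j}\}_j$ of $\mathcal{H}_{in}$ with $\Lambda_{\text{ind}}(\rho)=\sum_j\bra{j}\rho\ket{j}\sigma_j$. The basis $\{\ket{i}\}_i$ of Definition \ref{d1} need not coincide with $\{\ket{j}\}_j$. However,
\[
\Lambda_{\text{ind}}(\ketbra{i}{i})=\sum_j|\langle j|i\rangle|^2\sigma_j
\]
is a convex combination, and a pure state is extremal. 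Hence every $\sigma_j$ with $\langle j|i\rangle\neq0$ equals the pure state $\Lambda_{\text{ind}}(\ketbra{i}{i})$. Every $j$ has nonzero overlap with some $\ket{i}$, so every $\sigma_j=\ketbra{\phi_j}{\phi_j}$ is pure. Consequently $\{K_j=\ket{\phi_j}\bra{j}\}_{j=0}^{d_{in}-1}$ is a Kraus representation of $\Lambda_{\text{ind}}$, and the Choi rank equals $d_{in}$.

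\textbf{Step 2: orthogonality from the random-isometry form.} By Theorem \ref{t1} we also have $\Lambda_{\text{ind}}=\sum_k p_k\mathcal{V}_k\cdot\mathcal{V}_k^\dagger$ with isometries $\mathcal{V}_k$. All Kraus representations of a channel span the same operator space. Therefore each $\mathcal{V}_k$ with $p_k>0$ can be written as $\mathcal{V}_k=\sum_j c_{kj}\ket{\phi_j}\bra{j}$, so that $\mathcal{V}_k\ket{j}=c_{kj}\ket{\phi_j}$. The isometry condition $\mathcal{V}_k^\dagger\mathcal{V}_k=I$ then forces two things. First, $|c_{kj}|=1$. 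Second, $\langle\phi_j|\phi_{j'}\rangle=0$ for $j\neq j'$, because the images of orthonormal vectors must be orthonormal and all coefficients are nonzero. So $\{\ket{\phi_j}\}_j$ is an orthonormal family in $\mathcal{H}_{out}$. I expect this step to be the crux, since it is where the RI property is genuinely used; the rest is bookkeeping.

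\textbf{Step 3: construction.} Set $\omega=e^{2\pi i/d_{in}}$ and $\mathcal{W}_m:=\sum_j\omega^{mj}\ket{\phi_j}\bra{j}$ for $m=0,\dots,d_{in}-1$. By Step 2 each $\mathcal{W}_m$ is an isometry. A direct computation using $\frac1{d_{in}}\sum_m\omega^{m(j-j')}=\delta_{jj'}$ gives
\[
\frac1{d_{in}}\sum_m\mathcal{W}_m\rho\mathcal{W}_m^\dagger=\sum_j\bra{j}\rho\ket{j}\ketbra{\phi_j}{\phi_j}=\Lambda_{\text{ind}}(\rho).
\]
This is the required realization by exactly $d_{in}$ isometries, each with probability $\frac1{d_{in}}$. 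It also saturates the Choi-rank lower bound on the number of isometries noted after Theorem \ref{t1}.
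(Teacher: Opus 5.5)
Your proof is correct and follows the paper's route: C-Q form with pure outputs, the random-isometry decomposition forcing each $\mathcal{V}_k$ to send $\ket{j}$ to a phase times $\ket{\phi_j}$ (hence the $\ket{\phi_j}$ are orthonormal), then the Fourier-phased isometries $\mathcal{W}_m$ with uniform weights $\frac{1}{d_{in}}$. Your Step 1 extremality argument and Step 2 Kraus-span argument justify two points the paper only asserts: that every C-Q output $\sigma_j$ is pure even if the basis of Definition~\ref{d1} differs from the C-Q basis, and why the two decompositions force $\mathcal{V}_k\ket{j}=e^{i\theta_{kj}}\ket{\phi_j}$.
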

\begin{proof}
For a pure and perfect Q-DEMUX \(\mathrm{D}_s\), the induced quantum channel \(\Lambda_{\text{ind}}\) will take the following form:
\[\Lambda_{\text{ind}}(\rho)=\sum_kp_k\mathcal{V}_k\rho\mathcal{V}_k^{\dagger}=\sum_{j=0}^{d_{in}-1}\bra{\psi_j}\rho\ket{\psi_j}\ketbra{\phi_j}{\phi_j}.\]
Equivalence between these two realization implies that for every \(k\), \[\mathcal{V}_k\ket{\psi_j}=e^{i\theta_{kj}}\ket{\phi_j},\forall j\in\{0,\cdots,d_{in}-1\}.\]
Moreover, since every isometry preserves orthogonality and the set of states \(\{\ket{\psi_j}\}_{j=0}^{d_{in}-1}\) is mutually orthogonal, the set of states \(\{\ket{\phi_j}\}_{j=0}^{d_{in}-1}\) must also be mutually orthogonal.

Let us now consider the dilation of the induced quantum channel \(\Lambda_{\text{ind}}\) in the C-Q form:
\begin{align}\label{e8}
\mathcal{V}_{\Lambda}\ket{\psi}_A=\sum_{j=0}^{d_{in}-1}c_j\ket{\phi_j}_Q\otimes\ket{\psi_j}_C
\end{align}
for every quantum state \(\sum_jc_j\ket{\psi_j}=\ket{\psi}_A\in\mathcal{H}_{in}\).
We can further consider the Fourier transformed basis \(\{\ket{\tilde{\psi_j}}\}_{j=0}^{d_{in}-1}\) for the \(d_{in}\)-dimensional space, such that
\begin{align}\label{e8.5}
\ket{\tilde{\psi_j}}=\frac1{\sqrt{d_{in}}}\sum_ke^{\frac{2\pi (j.k)}{d_{in}}}\ket{\psi_k}\text{ and }\ket{\psi_j}=\frac1{\sqrt{d_{in}}}\sum_ke^{-\frac{2\pi (j.k)}{d_{in}}}\ket{\tilde{\psi_k}},~\forall j.\end{align}
Then by replacing the basis for the subsystem \(C\) in Eq. (\ref{e8}), we have
\begin{align}\label{e9}
\nonumber\mathcal{V}_{\Lambda}\ket{\psi}_A&=\frac1{\sqrt{d_{in}}}\sum_{j,k=0}^{d_{in}-1}c_je^{-\frac{2\pi (j.k)}{d_{in}}}\ket{\phi_j}_Q\otimes\ket{\tilde{\psi_k}}_C\\&=\frac1{\sqrt{d_{in}}}\sum_{k=0}^{d_{in}-1}(\sum_{j=0}^{d_{in}-1}c_je^{-\frac{2\pi (j.k)}{d_{in}}}\ket{\phi_j})_Q\otimes\ket{\tilde{\psi_k}}_C
\end{align}
We define a collection of isometries
\(\{\mathcal{W}_k:\mathcal{H}_{in}\to\mathcal{H}_{out}\}_{k=0}^{d_{in}-1}\) as
\begin{align}\label{e10}
    \mathcal{W}_k\ket{\psi_j}=e^{-\frac{2\pi(j.k)}{d_{in}}}\ket{\phi_j},~\forall j\in\{0,\cdots, d_{in}-1\}.
\end{align}
Using these isometries, we can rewrite Eq. (\ref{e9}) as
\[\mathcal{V}_{\Lambda}\ket{\psi}_A=\frac1{\sqrt{d_{in}}}\sum_{k=0}^{d_{in}-1}\mathcal{W}_k\ket{\psi}_Q\otimes\ket{\tilde{\psi_k}}_C.\]
By tracing out the ancillary subsystem \(C\), we obtain
\[\Lambda_{ind}(\rho)=\Tr_C[\mathcal{V}_{\Lambda}\rho\mathcal{V}_{\Lambda}^{\dagger}]=\frac1{d_{in}}\sum_{k=0}^{d_{in}-1}\mathcal{W}_k\rho\mathcal{W}_k^{\dagger}\]
and this completes the proof. 
\end{proof}
We will now move to the circuit realization for the pure-perfect Q-DEMUXs.
\begin{figure}[h]
    \centering
    \includegraphics[width=0.65\linewidth]{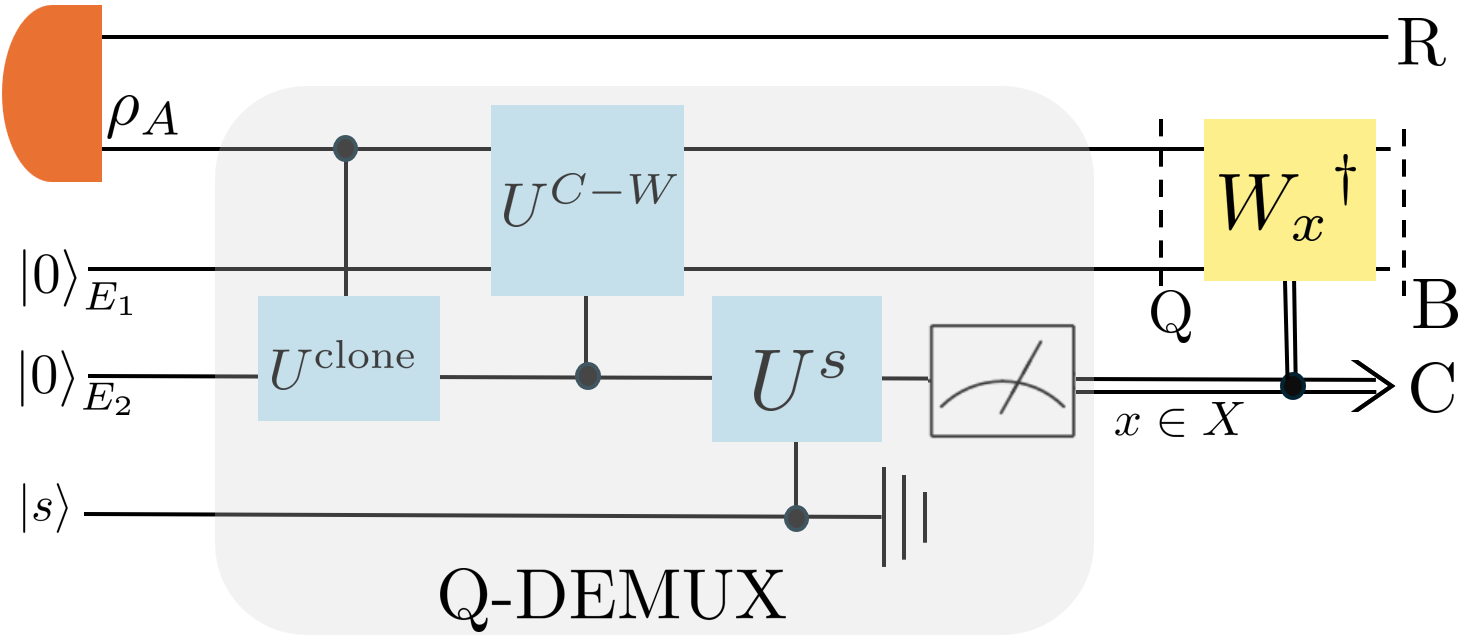}
    \caption{\emph{Circuit diagram for a pure-perfect Q-DEMUX.} The \textcolor{gray}{gray} box denotes the Q-DEMUX realization in terms of the controlled unitary gates (\textcolor{cyan}{cyan}) and the \(d_{in}\)-dimensional computational basis measurement on \(E_2\). Finally, the correcting unitary (\textcolor{YellowOrange}{yellow}) \(\{W_x^{\dagger}\}_x\) is appended on the subsystem \(Q\simeq AE_{1}\) externally.}
    \label{fig:qdemux}
\end{figure}
 \begin{proposition}\label{p3}
   Any pure-perfect Q-DEMUX \(\mathrm{D}_s:\mathcal{D}(\mathcal{H}_{in})\to\mathcal{D}(\mathcal{H}_{out})\times X\) can be simulated with a \(\lceil\frac{d_{out}}{d_{in}}\rceil\times d_{in}\)-dimensional ancillary quantum system, a selector qubit \(\ket{s}\in\{\ket{0},\ket{1}\}\) and a selector independent computational basis measurement.
\end{proposition}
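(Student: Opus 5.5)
We build the circuit directly from the structure given by Lemma \ref{l4}. For a pure-perfect Q-DEMUX the induced channel has the C-Q form $\Lambda_{\text{ind}}(\rho)=\sum_{j}\bra{\psi_j}\rho\ket{\psi_j}\ketbra{\phi_j}{\phi_j}$ with $\{\ket{\phi_j}\}_j$ orthonormal in $\mathcal{H}_{out}$. Its minimal dilation is the isometry $\mathcal{V}_\Lambda$ of Eq. (\ref{e8}), whose environment $C$ is only $d_{in}$-dimensional. By Eq. (\ref{e9}), the same isometry, read in the Fourier basis $\{\ket{\tilde\psi_k}\}$ of $C$, gives the random-isometry form with the isometries $\mathcal{W}_k$ of Eq. (\ref{e10}). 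The plan has three steps. First, realize $\mathcal{V}_\Lambda$ unitarily using an ancilla $E=E_1E_2$ with $d_{E_1}=\lceil d_{out}/d_{in}\rceil$ and $d_{E_2}=d_{in}$. Second, let the selector act only by a controlled Fourier transform on $E_2$. Third, measure $E_2$ in a fixed computational basis.

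\textbf{Step 1 (embedding).} We embed $\mathcal{H}_{out}$ into $\mathcal{H}_A\otimes\mathcal{H}_{E_1}$, which has dimension $d_{in}\lceil d_{out}/d_{in}\rceil\ge d_{out}$. This identifies $Q\simeq AE_1$, as in Fig. \ref{fig:qdemux}. Initialize $E_1E_2$ in $\ket{0}\ket{0}$. The map $\ket{\psi_j}_A\ket{0}_{E_1}\ket{0}_{E_2}\mapsto\ket{\phi_j}_{AE_1}\ket{j}_{E_2}$ sends orthonormal vectors to orthonormal vectors, because the $\ket{\phi_j}$ are orthonormal (shown in the proof of Lemma \ref{l4}). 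It therefore extends to a unitary $U$ on $AE_1E_2$. We can write $U$ concretely as a controlled-copy gate $\ket{\psi_j}_A\ket{0}_{E_2}\mapsto\ket{\psi_j}_A\ket{j}_{E_2}$, followed by a unitary on $AE_1$ controlled by $E_2$ that maps $\ket{\psi_j}\ket{0}_{E_1}$ to $\ket{\phi_j}$. By construction $\Tr_{E_2}[U(\rho\otimes\ketbra{00}{00})U^\dagger]=\Lambda_{\text{ind}}(\rho)$, so $U$ realizes $\mathcal{V}_\Lambda$ with $C\equiv E_2$.

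\textbf{Step 2 (selector).} Apply the controlled gate $\ketbra{0}{0}_s\otimes I_{E_2}+\ketbra{1}{1}_s\otimes F^\dagger_{E_2}$, where $F$ is the discrete Fourier transform taking $\ket{k}$ to $\ket{\tilde k}$ as in Eq. (\ref{e8.5}). Then measure $E_2$ in the computational basis $\{\ket{x}\}$; this measurement does not depend on $s$.

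For $s=0$, outcome $x$ produces the CP map $\rho\mapsto\bra{\psi_x}\rho\ket{\psi_x}\ketbra{\phi_x}{\phi_x}$. This is exactly the C-Q instrument (\ref{eq:instrument_CQ}), so by Proposition \ref{p2} we get $\mathcal{C}=\log_2 d_{in}$.

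For $s=1$, Eq. (\ref{e9}) shows that the post-measurement branch is $\frac{1}{d_{in}}\mathcal{W}_x\rho\mathcal{W}_x^\dagger$. This is the RI instrument (\ref{eq:instrument_RI}), and the correction $\mathcal{N}_x$ from Proposition \ref{p1}, which is essentially $\mathcal{W}_x^\dagger$ on $AE_1$, gives $\mathcal{Q}=\log_2 d_{in}$. Because both branches come from the same isometry with a unitary acting only on the environment, the induced channel is independent of $s$. Theorem \ref{t1} then gives $\mathcal{T}_s=2\log_2 d_{in}$.

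\textbf{Main obstacle.} The delicate step is the dimension bookkeeping and the phase conventions. We must check that the $E_2$-controlled unitary mapping $\ket{\psi_j}\ket{0}_{E_1}\mapsto\ket{\phi_j}$ exists inside $AE_1$; this holds because, for each fixed $j$, it only needs to send one unit vector to another. We also need the Fourier sign conventions to match Eq. (\ref{e10}), which we would verify by tracking the phase $e^{-2\pi i jk/d_{in}}$ explicitly. Finally, we should note that when $d_{out}$ is not a multiple of $d_{in}$, outputs supported outside the embedded copy of $\mathcal{H}_{out}$ never occur, since $U$ maps the relevant inputs into that subspace.
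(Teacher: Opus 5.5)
Your proposal is correct and is essentially the paper's own construction: a basis-copying gate into $E_2$, an $E_2$-controlled unitary on $AE_1\simeq Q$ realizing the C-Q dilation of Eq.~(\ref{e8}), a selector-controlled switch on $E_2$ between the $\{\ket{\psi_j}\}$ read-out and the Fourier read-out, a fixed computational-basis measurement, and the correction $\mathcal{W}_x^\dagger$. One small improvement is that your controlled unitary $\ket{\psi_j}\ket{0}\mapsto\ket{\phi_j}$ carries no phase, so Eq.~(\ref{e8}) is reproduced exactly and the $s=1$ branches are exactly the $\mathcal{W}_x$ of Eq.~(\ref{e10}); the paper's controlled-$W_j$ gate picks up extra phases $e^{-2\pi i j^2/d_{in}}$, which are harmless because after correction they only amount to a fixed unitary.
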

\begin{proof}
   The proof is evident from Lemma \ref{l4}.
    
    Let us consider the unitary realization for the isometry \(\mathcal{V}_{\Lambda}\) in Eq. (\ref{e8}), with an ancillary quantum system initialized in the quantum state \(\ket{0}_E=\ket{0}_{E_1}\otimes\ket{0}_{E_2}\) with \(dim(\mathcal{H}_{E_1})=\lceil\frac{d_{out}}{d_{in}}\rceil\) and \(dim(\mathcal{H}_{E_2})=d_{in}\), such that
    \begin{gather}\nonumber
        U_{\Lambda}\ket{\psi}_A\otimes\ket{0}_{E}=U_{AE_1E_2}^{C-W}\circ(U_{AE_2}^{\text{clone}}\otimes I_{E_1})\ket{\psi}_A\otimes\ket{0}_{E_1}\otimes\ket{0}_{E_2},\text{ where, }\\\nonumber
      U_{AE_2}^{\text{clone}}\ket{\psi_j}_A\otimes\ket{0}_{E_2}=
       \ket{\psi_j}_A\otimes\ket{\psi_j}_{E_2}~\forall j\in\{0,\cdots,d_{in}-1\}~\text{and}\\\nonumber
       U_{AE_1E_2}^{C-W}\ket{\psi}_A\otimes\ket{0}_{E_1}\otimes\ket{\psi_j}_{E_2}= W_j(\ket{\psi}\otimes\ket{0})\otimes\ket{\psi_j}_{E_2},~\forall \ket{\psi},\ket{\psi_j}\in\mathcal{H}_{in}\text{ and }j\in\{0,\cdots,d_{in}-1\}.
    \end{gather}
    Here \(U^{\text{clone}}\in\mathcal{L}(\mathcal{H}_{in}\otimes\mathcal{H}_{E_2})\) denotes the cloning unitary for the orthonormal basis \(\{\ket{\psi_j}\}_{j=0}^{d_{in}-1}\) and \(U^{C-W}=\sum_{j=0}^{d_{in}-1}W_j\otimes\ketbra{j}{j}_{E_2}\) denotes the controlled \(W\) gate, where each unitary \(W_j:\mathcal{H}_{in}\otimes\mathcal{H}_{E_1}\to\mathcal{H}_{out}\) corresponds to the dilation of the isometries in Eq. (\ref{e10}).

Followed by this global unitary \(U_{\Lambda}\), we consider a selector dependent unitary \(U_s\in\mathcal{L}(\mathcal{H}_{E_2}\otimes\mathbb{C}^2)\), such that
\begin{align}
    U_s=U^0_{E_2}\otimes\ketbra{0}{0}_{s}+U^1_{E_2}\otimes\ketbra{1}{1}_{s},\text{ with }U^0\ket{\psi_j}=\ket{j}~\&~U^1\ket{\tilde{\psi_j}}=\ket{j},~\forall j\in\{0,\cdots, d_{in}-1\},
\end{align}
where \(\{\ket{\tilde{\psi_j}}\}\) is the Fourier Transformed basis of the orthonormal basis \(\{\ket{\psi_j}\}\), as in Eq. (\ref{e8.5}). Finally, the outcome of a computational basis measurement on the ancillary quantum system \(E_2\) is stored as the output classical random variable in \(C\) (see Fig. \ref{fig:qdemux} for the circuit).
\end{proof}
   Theorem \ref{t1} characterizes the set of quantum channels, namely the CQ-RI, for which there exists a of Q-DEMUX realization attaining the maximal demultiplexing strength. In other words, a Q-DEMUX perfectly accomplishes the task of generalized information demultiplexing, if and only if the induced quantum channel is CQ-RI and the selector can be used to implement the RI and C-Q instrument realization of the induced quantum channel to process the quantum and the classical data respectively. While the perfect Q-DEMUX configurations are completely characterized above, it stimulates further investigation for the imperfect cases. To this goal, later in Theorem \ref{t2}, we will highlight the characteristic features of the pair of instrument realizations for a given quantum channel, configuring a Q-DEMUX with demultiplexing strength beyond a critical bound.
%
%
%
%
\section{Q-DEMUXs Without Access to the Selector}
We will now consider a stronger variant of Q-DEMUX, where the sender, Alice, remains oblivious regarding the nature of the data she is supposed to process. In other words, the exact correlation between the her subsystem with that of the reference, that is, whether it is entangled or classically correlated, is unknown to Alice. Evidently, in such a scenario the access to the selector provides no additional advantage to Alice. Therefore, given a Q-DEMUX, without any access to the selector, i.e., a particular instrument realization of the induced quantum channel \(\Lambda_{\text{ind}}:\mathcal{D}(\mathcal{H}_{in})\to\mathcal{D}(\mathcal{H}_{out})\), Alice is only allowed to maximize over all possible input quantum states to achieve the maximal demultiplexing strength of that particular Q-DEMUX. Mathematically, it reads
\begin{align}\label{e13}
\mathcal{T}(\mathrm{D})=\max_{\rho_A\in\mathcal{D}(\mathcal{H}_{in})}[\mathcal{C}(\mathrm{D},\rho_{RA})+\mathcal{Q}(\mathrm{D},\ket{\psi}_{RA})]
\end{align}
where, \(\rho_{RA}\) and \(\ket{\psi}_{RA}\) are the same as mentioned in Eq. (\ref{eq:e4}). However, in contrast to Eq. (\ref{eq:e4}), here the selector subscript \(s\in\{0,1\}\) is omitted from the expression, denoting the scenario of no access to the selector. Additionally, with a further maximization over all Q-DEMUXs with the same induced quantum channel \(\Lambda_{\text{ind}}\), we can characterize the demultiplexing strength of the channel itself, namely
\begin{align}\label{e13.5}\mathcal{T}(\Lambda_{\text{ind}})=\max_{\substack{\mathrm{D}:\mathcal{D}(\mathcal{H}_{in})\to\mathcal{D}(\mathcal{H}_{out})\times X_C\\s.t.,\Tr_C[\mathrm{D}(\rho_A)]= \Lambda_{\text{ind}}(\rho_A)}}\mathcal{T}(\mathrm{D}).\end{align}

\noindent
Nevertheless, we will now derive an upper-bound for demultiplexing strength of a Q-DEMUX, where the sender has no access to the selector, for which the following lemma is instrumental.
\begin{lemma}\label{l4.5}
    If a quantum channel \(\Lambda_{\text{ind}}\) achieves its maximal demultiplexing strength \(\mathcal{T}(\Lambda_{\text{ind}})\) (as in Eq. (\ref{e13.5})) for a Q-DEMUX configuration \(\mathrm{D}^*\), then the instrument realization of \(\Lambda_{\text{ind}}\) in \(\mathrm{D}^*\), can be sufficiently modeled by the rank one measurement on the ancillary quantum system.
\end{lemma}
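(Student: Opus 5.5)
The strategy is a fine-graining argument. Given any Q-DEMUX $\mathrm{D}$ realizing $\Lambda_{\text{ind}}$, I would refine its ancillary POVM into a rank-one POVM and show that neither the classical term nor the quantum term in Eq. (\ref{e13}) can decrease under this refinement. Consequently, if $\mathrm{D}^*$ attains $\mathcal{T}(\Lambda_{\text{ind}})$, its rank-one refinement also attains it, and is still a valid Q-DEMUX for the same channel.

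\emph{Step 1: build the refinement.} Fix an isometric dilation $(\mathcal{H}_E,V,M)$ of the instrument $\{\Lambda_x\}_x$ of $\mathrm{D}^*$. Spectrally decompose each effect as $M_x=\sum_{l}\ketbra{e_{x,l}}{e_{x,l}}$, using unnormalized vectors and absorbing eigenvalues. This gives a rank-one POVM $\{M_{x,l}=\ketbra{e_{x,l}}{e_{x,l}}\}_{(x,l)}$ on $\mathcal{H}_E$. Define the refined instrument by
\[\Lambda_{x,l}(\rho)=\Tr_E[V\rho V^\dagger(I_{out}\otimes M_{x,l})],\]
and let $\mathrm{D}'$ be the corresponding Q-DEMUX with outcome set $X'=\{(x,l)\}$. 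Since $\sum_{x,l}\Lambda_{x,l}=\Lambda_{\text{ind}}$, the device $\mathrm{D}'$ lies in the feasible set of Eq. (\ref{e13.5}). Moreover, $\sum_l\Lambda_{x,l}=\Lambda_x$, so $\mathrm{D}$ is recovered from $\mathrm{D}'$ by classical post-processing with the coarse-graining map $h:(x,l)\mapsto x$.

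\emph{Step 2: the classical term.} Fix an input $\rho_A$. In $\mathcal{C}(\mathrm{D}',\rho_{RA})$, take the function $f\circ h$. This shows $\mathcal{C}(\mathrm{D}',\rho_{RA})\ge\mathcal{C}(\mathrm{D},\rho_{RA})$ for the same $\rho_{RA}$. Equivalently, by the data-processing inequality $I(Y:X)\le I(Y:X')$, following the reasoning of Lemma \ref{l1}.

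\emph{Step 3: the quantum term.} For the same purification $\ket{\psi}_{RA}$, take any corrections $\{\mathcal{N}_x\}$ for $\mathrm{D}$ and set $\mathcal{N}'_{x,l}:=\mathcal{N}_x$. Then
\[\sum_{x,l}(id\otimes\mathcal{N}'_{x,l}\circ\Lambda_{x,l})(\psi)=\sum_x(id\otimes\mathcal{N}_x\circ\Lambda_x)(\psi),\]
so the corrected states coincide and $\mathcal{Q}(\mathrm{D}',\psi)\ge\mathcal{Q}(\mathrm{D},\psi)$. Because the input state is the same in both terms, the sum in Eq. (\ref{e13}) satisfies $\mathcal{T}(\mathrm{D}')\ge\mathcal{T}(\mathrm{D}^*)=\mathcal{T}(\Lambda_{\text{ind}})$. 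Maximality then forces equality, so $\mathrm{D}'$, which has a rank-one ancillary measurement, is also optimal.

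\emph{Main obstacle.} The delicate point is making sure that the refinement genuinely stays inside the admissible class. First, it must induce the same channel, which holds by linearity. Second, the refinement has to be carried out at the level of the dilation $(\mathcal{H}_E,V,M)$, so that the new instrument is of the form stated. It is also worth recording that if the dilation is minimal, $M$ is unique, but nothing requires minimality here: any dilation and any rank-one refinement suffices. No step needs optimizing over $l$-dependent corrections. The inequalities only need the refined device to be able to simulate the coarse one, and post-processing provides exactly that.
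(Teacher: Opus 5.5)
Your proposal is correct and follows essentially the same fine-graining argument as the paper. You refine the ancillary POVM into rank-one effects, recover the original classical output by coarse-graining, and reuse $\mathcal{N}_x$ on every refined outcome $(x,l)$, so both terms of Eq.~(\ref{e13}) can only increase for the same input. The differences are minor: you refine all effects at once, and you conclude directly that the refined device is also optimal, which is the right conclusion from $\mathcal{T}(\mathrm{D}')\geq\mathcal{T}(\mathrm{D}^*)$, since that inequality does not by itself give the contradiction the paper's proof claims.
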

\begin{proof}
    By contradiction, suppose the instrument realization \(\mathcal{I}(X, \mathcal{H}_{in},\mathcal{H}_{out})\) for \(\mathrm{D}^*\) involves a POVM \(\{M_x\}_{x=0}^{X-1}\) on the ancillary quantum system, such that for one of the POVM elements (say \(M_{X-1},\)), \(rank(M_{X-1})>1\). This further implies 
    \[M_{X-1}\equiv\sum_{z=0}^{Z-1} N_z,\text{ where }~\forall z,~rank(N_z)=1,~\&~0<N_z<I.\]
    Moreover, we assume that the classical-quantum state \(\rho_{RA}\) and the entangled state \(\ket{\psi}_{RA}\), such that \(\Tr_R(\rho_{RA})=\Tr_R(\ketbra{\psi}{\psi}_{RA})=\rho_A\), along with the correcting operations \(\{\mathcal{N}_x\}_x\) maximizes \(\mathcal{T}(\mathrm{D}^*)\). 
    
    Then, consider another Q-DEMUX \(\mathrm{D}'\) in terms of the instrument realization \(\mathcal{I}'(X',\mathcal{H}_{in},\mathcal{H}_{out})\) of \(\Lambda_{\text{ind}}\), implementing the one-rank POVM \(\{\{M_x\}_{x\neq X-1},\{N_z\}_z\}\) on the ancillary quantum system. Moreover, if we denote \(X':=\{0,\cdots,~X-2,~X-1,\cdots, X+Z-2\}\), then evidently \[\mathcal{I}'_x\equiv\mathcal{I}_x,~\forall x\in\{0,\cdots,X-2\}\text{ and }\sum_{x=X-1}^{X+Z-2}\mathcal{I}'_x=\mathcal{I}_{X-1}.\]
    Therefore, for an input quantum state \(\rho\in\mathcal{D}(\mathcal{H}_{in})\),
    \begin{align}
     \nonumber\mathrm{D}^*(\rho)&=\sum_{x=0}^{X-1}\mathcal{I}_x(\rho)\otimes\ketbra{x}{x},\\\nonumber
 \text{ while, }\mathrm{D}'(\rho)&=\sum_{x=0}^{X-2}
\mathcal{I}_x(\rho)\otimes\ketbra{x}{x}+ \sum_{z=X-1}^{X+Z-2}
\mathcal{I}'_z(\rho)\otimes\ketbra{z}{z}.
    \end{align}
    For the classical communication purpose, suppose Alice input the same state \(\rho_A=\sum_yp_y\rho_y\) to \(\mathrm{D}'\), sharing the same classical correlation \(\rho_{RA}\) with \(R\). The output random variable accordingly decoded as \(x\) for \(x\in\{0,\cdots,X-2\}\) and \(X-1\) whenever \(x\in\{X-1\cdots,X+Z-2\}\). This implies that
    \begin{align}\label{e5.1}
    \mathcal{C}(\mathrm{D}',\rho_{RA})=\mathcal{C}(\mathrm{D}^*,\rho_{RA}).
    \end{align}
    On the other hand, to process the quantum information consider the same \(\ket{\psi}_{RA}\), with the marginal \(\rho_A\) as the input to \(\mathrm{D}'\) and the correcting operations are \(\{\mathcal{N}_x\}_x\) for \(x\in\{0,\cdots, X-2\}\) and \(\mathcal{N}_{X-1}\) for \(x\in\{X-1,\cdots,X+Z-2\}\). Then
    \begin{align}\label{e5.2}
    \mathcal{Q}(\mathrm{D}',\ket{\psi}_{RA})=\max_{\{\mathcal{M}_x\}_x}I_{\sigma(\psi)}(R\rangle B)\geq \mathcal{Q}(\mathrm{D}',\ket{\psi}_{RA},\{\mathcal{N}_x\}_x)=\mathcal{Q}(\mathrm{D}^*,\ket{\psi}_{RA}),
    \end{align}
    where, \(\sigma(\psi)\) is the final quantum state shared between \(R\) and \(B\), after passing \(\Tr_R(\ketbra{\psi}{\psi}_{RA})\) through \(\mathrm{D}'\) and corrected by the optimized correcting operations. Also, \(\mathcal{Q}(\mathrm{D}',\ket{\psi}_{RA},\{\mathcal{N}_x\}_x)\) denotes the quantum capacity for a specific choice of the correcting operations \(\{\mathcal{N}_x\}_x\).

    Using Eqs. (\ref{e5.1}) and (\ref{e5.2}) together, we can conclude
    \begin{align*}
        \mathcal{T}(\mathrm{D}')&=\max_{\rho'_A\in\mathcal{D}(\mathcal{H}_{in})}[\mathcal{C}(\mathrm{D}',\rho'_{RA})+\mathcal{Q}(\mathrm{D}',\ket{\psi'}_{RA})]\\&\geq\mathcal{C}(\mathrm{D}',\rho_{RA})+\mathcal{Q}(\mathrm{D}',\ket{\psi}_{RA},\{\mathcal{N}_x\}_x)\\
        &=\mathcal{C}(\mathrm{D}^*,\rho_{RA})+\mathcal{Q}(\mathrm{D}^*,\ket{\psi}_{RA})=\mathcal{T}(\mathrm{D}^*).
    \end{align*}
    But, since both \(\mathrm{D}^*\) and \(\mathrm{D}'\) possesses the same induced channel \(\Lambda_{\text{ind}}\), this contradicts our initial assumption that \(\mathcal{T}(\Lambda_{\text{ind}})=\mathcal{T}(\mathrm{D}^*)\). Hence completes the proof.
\end{proof}
At this point, it is important to mention that any POVM measurement can be classically post-processed from another POVM measurement, for which each of the elements are rank one. Therefore, Lemma \ref{l4.5} operationally implies that the classical post-processing of the quantum instrument can not increase its demultiplexing strength.
\begin{theorem}\label{t2}
    For any Q-DEMUX \(\mathrm{D}:\mathcal{D}(\mathcal{H}_{in})\to\mathcal{D}(\mathcal{H}_{out})\times X\), the selector-less total demultiplexing strength \(\mathcal{T}(\mathrm{D})\leq \log_2 d_{in}\).
\end{theorem}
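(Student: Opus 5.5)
The plan is to fix the single instrument $\{\mathcal{I}_x\}_{x\in X}$ underlying $\mathrm{D}$ (Lemma \ref{l2}) and an arbitrary input $\rho_A$, and to show
\[
\mathcal{C}(\mathrm{D},\rho_{RA})+\mathcal{Q}(\mathrm{D},\ket{\psi}_{RA})\le S(\rho_A)\le \log_2 d_{in}.
\]
The key idea is to put both tasks on one common pure state. Let $\ket{\psi}_{RA}$ be the purification of $\rho_A$. By the Hughston--Jozsa--Wootters theorem, any ensemble $\{p_y,\rho_y\}$ with $\sum_y p_y\rho_y=\rho_A$ is obtained by measuring some POVM $\{F_y\}$ on $R$. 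The classical input $\rho_{RA}$ is then produced from $\ket{\psi}_{RA}$ by this measurement on $R$, which commutes with the instrument acting on $A$. Applying the instrument to $\ket{\psi}_{RA}$ gives the cq state
\[
\omega_{RQX}=\sum_x p_x\,\omega^x_{RQ}\otimes\ketbra{x}{x}_X,
\]
with $\omega^x_{RQ}$ the normalised states $(id\otimes\mathcal{I}_x)(\ketbra{\psi}{\psi})/p_x$. By Lemma \ref{l1} we may take $f,g$ to be the identity.

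For the classical term, the data-processing inequality under the measurement $\{F_y\}$ on $R$ gives $\mathcal{C}(\mathrm{D},\rho_{RA})=I(Y:X)\le I(R:X)_\omega=S(R)-S(R|X)_\omega$. For the quantum term, the correction $\sum_x \mathcal{N}_x(\cdot)\otimes\bra{x}\cdot\ket{x}$ is a local channel on $QX$, so the coherent information can only decrease. Hence, if the coherent information is positive,
\[
\mathcal{Q}(\mathrm{D},\ket{\psi}_{RA})\le I(R\rangle QX)_\omega=-S(R|QX)_\omega=\sum_x p_x\bigl(-S(R|Q)_{\omega^x}\bigr),
\]
using that $X$ is a classical register. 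If it is not positive, $\mathcal{Q}=0$. In that case the sum is bounded by $I(R:X)\le S(R)$, and we are done.

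It remains to compare $\sum_x p_x(-S(R|Q)_{\omega^x})$ with $S(R|X)_\omega=\sum_x p_x S(R)_{\omega^x}$. For each $x$, the bound $-S(R|Q)_{\omega^x}\le S(R)_{\omega^x}$ holds for every bipartite state. To see this, purify $\omega^x_{RQ}$ by $E$: then $-S(R|Q)=S(R|E)\le S(R)$, which is subadditivity, exactly as in the proof of Proposition \ref{p1}. Summing over $x$ gives $\mathcal{Q}\le S(R|X)_\omega$. Therefore $\mathcal{C}+\mathcal{Q}\le S(R)-S(R|X)+S(R|X)=S(R)_\psi=S(\rho_A)\le\log_2 d_{in}$. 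Maximising over $\rho_A$ then proves the theorem. Note that Lemma \ref{l4.5} is not needed here, but it is consistent with this argument, since refining the instrument only affects $X$.

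The main obstacle is the first step: justifying that the classical-correlation scenario can be embedded into the same purified state, i.e.\ that $I(Y:X)\le I(R:X)_\omega$. This rests on the HJW steering argument together with the fact that a measurement on $R$ commutes with the instrument on $A$, both of which I must state carefully. A secondary point is handling the $\max\{\cdot,0\}$ in the definition of coherent information, which the case split above takes care of.
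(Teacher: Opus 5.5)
Your proof is correct, and it reaches the bound by a genuinely different route from the paper's.

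\textbf{The paper's route.} It first uses Lemma~\ref{l4.5} to refine the ancillary POVM into rank-one elements, so that each conditional state $\eta_{RQ|x}$ is pure and the quantum bound becomes the identity $I_\eta(R\rangle Q|X)=S_\eta(R|X)$. The classical term is kept in Shannon form $H_\sigma(R)-H_\sigma(R|X)$ for the label register. The proof then closes with $S_\eta(R|X)\le H_\sigma(R|X)$ and $H_\sigma(R)\le\log_2 d_{in}$.

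\textbf{Your route.} You steer the ensemble out of the \emph{same} purification via HJW, so that data processing gives $\mathcal{C}\le I(R:X)_\omega=S(R)-S(R|X)_\omega$ with $R$ treated as quantum. You replace the purity argument by the inequality $-S(R|Q)\le S(R)$, which holds outcome by outcome for an arbitrary instrument. The two pieces then telescope.

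\textbf{What your route buys.} No rank-one reduction is needed. The bound comes out as the sharper state-dependent quantity $S(\rho_A)$. Most importantly, the comparison between the classical and quantum scenarios is made rigorous for arbitrary ensembles. This last point matters because two of the paper's closing steps do not hold in general:
\begin{itemize}
\item The step $S_\eta(R|X)\le H_\sigma(R|X)$ is justified when the steering POVM is rank-one (e.g.\ for pure-state ensembles), but it fails in general. For the identity channel with the trivial ensemble $\{(1,\rho_A)\}$ and mixed $\rho_A$, one has $H_\sigma(R|X)=0<S(\rho_A)=S_\eta(R|X)$.
\item The step $H_\sigma(R)\le\log_2 d_{in}$ fails for a uniform ensemble of more than $d_{in}$ states.
\end{itemize}
Your HJW embedding, which bounds everything by von Neumann entropies of the common purification, is exactly what makes the final chain valid.

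\textbf{What the paper's route offers.} It gives a transparent picture of when the quantum bound is saturated, namely pure post-measurement states. It also ties the theorem directly to the post-processing interpretation of Lemma~\ref{l4.5}.
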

\begin{proof}

For a given induced quantum channel \(\Lambda_{\text{ind}}:\mathcal{D}(\mathcal{H}_{in})\to\mathcal{D}(\mathcal{H}_{out})\),
let us consider a Q-DEMUX \(\mathrm{D}\) in terms of an instrument realization
\(\mathcal{I}(X,\mathcal{H}_{in},\mathcal{H}_{out})\), involving the measurement
\(\{M_x\}_x\) on the ancillary quantum system (\(E\)). 

To establish the upper bound, using Lemma \ref{l4.5} we may therefore assume that each \(M_x\) is rank-one.

Next, for an input quantum state \(\rho_A\in\mathcal{D}(\mathcal{H}_{in})\), let 
\[
\rho_{AB}=\sum_yp_y\ketbra{y}{y}_R\otimes\rho^y_A,~\&~\rho_A=\Tr_R[\rho_{RA}],\quad\text{
and }
\sigma_{RQX}=\sum_yp_y\ketbra{y}{y}_R\otimes\mathcal{I}_x(\rho^y_A)\otimes\ketbra{x}{x}.
\]
The corresponding 
classical capacity is given as
\begin{equation}\label{e14}
\mathcal{C}(\mathrm{D},\rho_{RA})= I_{\sigma}(R:X)=H_\sigma(R)-H_\sigma(R|X)\end{equation}
Note that, in Eq. (\ref{e14}) we have used the Shannon entropy,
\(H(Z)=-\sum_zp_z\log_2p_z\), since the state \(\Tr_A[\rho_{RA}]\) on the reference
subsystem is classical and the \(X\) denotes the random variable in the classical output
port \(C\) of the Q-DEMUX.
To bound the quantum capacity, let \(\ket{\psi}_{RA}\) be a purification of \(\rho_A\) and let
\[
\eta_{RQX}=\sum_x (id\otimes \mathcal{I}_x)(\ketbra{\psi}{\psi}_{RA})\otimes \ketbra{x}{x}.
\]
For a given set of correction operations\(\{\mathcal{N}_x\}_x\), we denote the corrected
output state as
\[
\eta^{\mathcal{N}}_{RBX}=\sum_x (id\otimes
\mathcal{N}_x\circ\mathcal{I}_x)(\ketbra{\psi}{\psi}_{RA})\otimes \ketbra{x}{x}.
\]
We then have 
\begin{equation}\label{e15}
I_{\eta^{\mathcal{N}}}(R\rangle B)\le I_{\eta^{\mathcal{N}}}(R\rangle B|X)\le
I_{\eta}(R\rangle Q|X)=S_\eta(Q|X)-S_\eta(RQ|X)
\end{equation}
Here \(I_\eta(R\rangle Q|X)\) denotes the coherent
information of the state between \(R\) and the output quantum port \(Q\) of the Q-DEMUX,
conditioned over the classical outcome \(x\in X\) registered in \(C\).
The second inequality follows from the fact that for each conditional state
\(\eta_{RQ|x}\), applying the quantum channel \(\mathcal{N}_x\) cannot increase the
coherent information conditioned on \(x\). 
Note also that in Eq. (\ref{e15}),  the subsystem
\(R\) must be treated as a quantum system.

Now note that since \(M_x\) are rank-one, the post-measurement state \(\eta_{RQ|x}\)
between \(R\) and \(Q\) is pure for each of the outcome \(x\in X\).
Therefore, with \(p_x=\Tr[(id_R\otimes \mathcal{I}_x)(\ketbra{\psi}{\psi})]\) denoting the probability of observing \(x\) at
the classical output port,  we have \(S_\eta(RQ|X)=\sum_{x\in X}p_xS_\eta(RQ|X=x)=0\) and 
\(S_\eta(Q|X)=\sum_{x\in X}p_xS_\eta(Q|x)=\sum_{x\in X}p_xS_\eta(R|x)=S_\eta(R|X)\). It
follows that 
\[
\mathcal{Q}(\mathrm{D},\ket{\psi}_{RA})=\max_{\{\mathcal{N}_x\}_x} I_{\eta^{\mathcal{N}}}(R\rangle
B)\leq S_\eta(R|X).
\]
Finally, we obtain that 
\begin{align*}
\mathcal{C}(\mathrm{D},\rho_{RA})+\mathcal{Q}(\mathrm{D},\ket{\psi_{RA}})&\le
H_\sigma(R)-H_\sigma(R|X)+S_\eta(R|X)\\
&\le H_\sigma(R)\le \log_2 d_A,
\end{align*}
the second inequality follows from the fact that \(S_\eta(R|X)\leq H_\sigma(R|X)\). Taking
the supremum over all C-Q states \(\rho_{RA}\), purifications \(\ket{\psi_{RA}}\), and
finally over all input states \(\rho_A\)  completes the proof.
\end{proof}
It is now evident from Theorem \ref{t2}, that both for classical-quantum and the random
isometry channels there exist  instrument realization saturating the optimal bound of
demultiplexing strength even if the sender has no access to the selector. In particular,
for C-Q channels the instrument realization as in Eq. \eqref{eq:instrument_CQ} maximizes
\(\mathcal{T}(\mathrm{D})\) by \(\mathcal{C}(\mathrm{D})=\log_2d_{in}\) and
\(\mathcal{Q}(\mathrm{D})=0\). For a random isometry channel, the instrument
realization as in Eq. \eqref{eq:instrument_RI} attains
\(\mathcal{Q}(\mathrm{D})=\log_2d_{in}\) and \(\mathcal{C}(\mathrm{D})=0\), again
saturating the upper-bound for the selector-less total demultiplexing strength
\(\mathcal{T}(\mathrm{D})\). However, in contrast to Theorem \ref{t1}, we will show that
these are not the only classes of Q-DEMUXs maximizing the selector-less demultiplexing
strength. In fact, in the following we will show that any probabilistic mixture of a C-Q and
a random isometry channel has an instrument realization that achieves the optimal value for \(\mathcal{T}(\mathrm{D})\).
\begin{proposition}\label{p4}
   A quantum channel \(\Lambda_{\text{ind}}\) admits a Q-DEMUX realization \(\mathrm{D}\), for which the selector-less demultiplexing strength becomes optimal, if the channel can be represented as the direct sum of the random isometry and classical-quantum channels in two disjoint subspace of \(\mathcal{H}_{in}\).
\end{proposition}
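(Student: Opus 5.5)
The plan is to build one explicit instrument realization of such a channel and exhibit a single input state $\rho_A$ for which $\mathcal{C}(\mathrm{D},\rho_{RA})+\mathcal{Q}(\mathrm{D},\ket{\psi}_{RA})=\log_2 d_{in}$. Theorem \ref{t2} then gives the matching upper bound, so this value is optimal.

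\textbf{Setup.} We read the hypothesis as follows. There is a decomposition $\mathcal{H}_{in}=\mathcal{H}_1\oplus\mathcal{H}_2$, with $d_1+d_2=d_{in}$ and orthogonal projections $P_1,P_2$. Let $\{\ket{j}\}_{j}$ be an orthonormal basis of $\mathcal{H}_2$. The channel is
\[
\Lambda_{\text{ind}}(\rho)=\sum_k p_k\mathcal{V}_kP_1\rho P_1\mathcal{V}_k^\dagger+\sum_{j}\bra{j}\rho\ket{j}\sigma_j ,
\]
where the $\mathcal{V}_k$ are isometries on $\mathcal{H}_1$ and the off-diagonal blocks are discarded.

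\textbf{Instrument.} We take the instrument whose outcome set is the disjoint union $\{(1,k)\}_k\cup\{(2,j)\}_j$, with
\[
\mathcal{I}_{(1,k)}(\rho)=p_k\mathcal{V}_kP_1\rho P_1\mathcal{V}_k^\dagger,\qquad \mathcal{I}_{(2,j)}(\rho)=\bra{j}\rho\ket{j}\sigma_j .
\]
These maps sum to $\Lambda_{\text{ind}}$. They glue the instruments of Eqs. \eqref{eq:instrument_RI} and \eqref{eq:instrument_CQ} across the two blocks.

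\textbf{Input state and classical part.} Fix $q\in[0,1]$ and choose the input
\[
\rho_A=\frac{q}{d_1}P_1+\frac{1-q}{d_2}P_2 .
\]
For the classical task, encode $y$ in an orthonormal basis of $\mathcal{H}_1$ together with the basis $\{\ket{j}\}$ of $\mathcal{H}_2$, using these weights. The outcome $x$ reveals the block. Inside $\mathcal{H}_2$ it reveals $j$ exactly, while inside $\mathcal{H}_1$ the label $k$ is independent of $y$. Hence
\[
\mathcal{C}=H(Y)-H(Y|X)=\bigl[h(q)+q\log_2 d_1+(1-q)\log_2 d_2\bigr]-q\log_2 d_1=h(q)+(1-q)\log_2 d_2 ,
\]
where $h$ is the binary entropy.

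\textbf{Quantum part.} Take the purification
\[
\ket{\psi}_{RA}=\sqrt{q}\,\ket{\phi^+_1}+\sqrt{1-q}\,\ket{\phi^+_2},
\]
where $\ket{\phi^+_i}$ is maximally entangled on $\mathcal{H}_i\otimes\mathcal{H}_i$. Choose the output space $\mathcal{H}_B\cong\mathcal{H}_1\oplus\mathcal{H}_2$ and the corrections as follows.
\begin{itemize}
\item For outcome $(1,k)$, $\mathcal{N}_{(1,k)}$ inverts $\mathcal{V}_k$ as in Proposition \ref{p1} and places the result in the $\mathcal{H}_1$-block of $B$.
\item For outcome $(2,j)$, $\mathcal{N}_{(2,j)}$ discards its input and prepares $\ket{j}$ in the $\mathcal{H}_2$-block.
\end{itemize}
The corrected state is then block diagonal in both $R$ and $B$:
\[
\sigma_{RB}=q\,\phi^+_1\oplus(1-q)\,\frac{1}{d_2}\sum_j\ketbra{jj}{jj}.
\]
The entropy of such a direct sum is $h(q)$ plus the weighted block entropies. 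Therefore
\[
S(B)-S(RB)=q\log_2 d_1+(1-q)\cdot 0 ,
\]
because the $h(q)$ terms cancel. This gives $\mathcal{Q}\ge q\log_2 d_1$.

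\textbf{Optimising over $q$.} Adding the two parts,
\[
\mathcal{T}(\mathrm{D})\ge h(q)+q\log_2 d_1+(1-q)\log_2 d_2 ,
\]
which is exactly the entropy of $\rho_A$. At $q=d_1/d_{in}$, where $\rho_A$ is maximally mixed, this equals $\log_2 d_{in}$. Combined with Theorem \ref{t2}, this proves optimality.

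\textbf{Expected obstacle.} The main difficulty is bookkeeping rather than estimation. First, the notion of ``direct sum'' must be fixed precisely, including the vanishing of cross terms $P_1\rho P_2$, so that the glued maps really form an instrument of $\Lambda_{\text{ind}}$. Second, one must justify that the corrections may map into orthogonal blocks of $B$, since the paper allows arbitrary $\mathcal{N}:\mathcal{H}_Q\to\mathcal{H}_B$. Without orthogonal blocks, the mixing of the two outputs could reduce the coherent information below $q\log_2 d_1$. I would check this block-orthogonality entropy computation carefully.
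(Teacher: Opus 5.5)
Your proposal is correct and follows essentially the same route as the paper. It uses the same glued RI/CQ instrument, the same corrections (inverting $\mathcal{V}_k$ on the RI block and re-preparing $\ket{j}$ in the orthogonal CQ block, so taking $B\cong\mathcal{H}_{in}$ settles your block-orthogonality concern), the maximally mixed input, and Theorem~\ref{t2} for the matching upper bound; your parameter $q$, giving the lower bound $S(\rho_A)$ before setting $q=d_1/d_{in}$, only mildly refines the paper's direct choice $\rho_A=I/d_{in}$.
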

\begin{proof}
    Let us consider a quantum channel \(\Lambda_{\text{ind}}:\mathcal{D}(\mathcal{H}_{in})\to\mathcal{D}(\mathcal{H}_{out})\) of the following form:
    \begin{align}\label{ep40}
        \Lambda_{\text{ind}}(\rho)=\sum_{j=0}^{k-1}\bra{\psi_j}\rho\ket{\psi_j}\sigma_j+\sum_{i=0}^{n-1}p_i\mathcal{V}_i[(I_{d_{in}}-\sum_{j=0}^{k-1}\ketbra{\psi_j}{\psi_j})\rho(I_{d_{in}}-\sum_{j=0}^{k-1}\ketbra{\psi_j}{\psi_j})]\mathcal{V}_i^{\dagger},
    \end{align}
    where, \(\langle\psi_m|\psi_n\rangle=\delta_{mn}\), \(\{\sigma_j\}_{j=0}^{k-1}\) with \(k<d_{in}\) is an arbitrary set of quantum states in \(\mathcal{D}(\mathcal{H}_{out})\) and for all \(l\in\{0,\cdots,n-1\}\), \(\mathcal{V}_l:\mathcal{H}_{in}\to\mathcal{H}_{out}\) is an arbitrary isometry. Note that the channel can be think of a classical-quantum channel in the \(k\)-dimensional subspace spanned by \(\{\ket{\psi_0},\cdots,\ket{\psi_{k-1}}\}\) and on the \((d_{in}-k)\)-dimensional subspace spanned by \(\{\ket{\psi_k},\cdots,\ket{\psi_{d_{in}-1}}\}\) the channel is a random isometry one.

    Now consider, the following isometry realization of the channel \(\Lambda_{\text{ind}}\), for any arbitrary \(\sum_{j=0}^{d_{in}-1}c_j\ket{\psi_j}=:\ket{\psi}\in\mathcal{H}_{in}\) as the input state
    \[\mathcal{V}_{\Lambda}\ket{\psi}_A=\sum_{j=0}^{k-1}c_j\ket{\phi_j}_{QQ'}\otimes\ket{j}_E+\sum_{j=k}^{n+k-1}\sum_{l=k}^{d_{in}-1}\sqrt{p_{j-k}}c_l(\mathcal{V}_{j-k}\ket{\psi_l})_Q\otimes\ket{\xi}_{Q'}\otimes\ket{j}_E.\]
    Here, \(\Tr_{Q'}[\ketbra{\phi_j}{\phi_j}_{QQ'}]=(\sigma_j)_Q\) for all \(j\in\{0,\cdots,k-1\}\), \(\ket{\xi}\) is an arbitrary quantum state in \(\mathcal{H}_{Q'}\) and \(\{\ket{j}\}_{j=0}^{n+k-1}\) forms an orthonormal basis for \(\mathcal{H}_E\). 
    
    Note that to construct a Q-DEMUX \(\mathrm{D}\), involving the above isometry, we will require a ancillary quantum system of dimension \(\lceil\frac{d_{out}}{d_{in}}\rceil\times \min\{d_Q,d_{Q'}\}\times (k+n)\). After the action of \(\mathcal{V}_{\Lambda}\), the sub-system \(Q'\) is traced out, sub-system \(Q\) stored at the output quantum port and the sub-system \(E\) is measured in the orthonormal basis \(\{\ketbra{j}{j}\}_{j=0}^{k+n-1}\), the outcome of which is stored at the classical register. Additionally, we define the correction operations \(\{\mathcal{N}_x\) over \(\mathcal{L}(\mathcal{H}_{out})\) depending upon the classical random variable \(x\in X\) as 
    \begin{align*}
        \mathcal{N}_x(\sigma)=\begin{cases}
             \Tr[\sigma]\ketbra{x}{x},\quad &x\in\{0,\cdots,k-1\}\\
\mathcal{V}_x^{\dagger}\sigma\mathcal{V}_x+\Tr[\sigma(I_{out}-P_x)]\omega,\quad &x\in\{k,\cdots, n+k-1\}
        \end{cases}
    \end{align*}
    where \(P_x\) and \(\omega\) are same as in the proof of the Proposition \ref{p1}.

    Now, if the \(A\) sub-system of the state \(\ket{\phi^+_{d_{in}}}\in\mathcal{H}_{in}^{\otimes 2}\) is sent through the Q-DEMUX, then after the correcting operations the final state can be written as
    \[\eta^{\mathcal{N}}_{RB}=\frac1{d_{in}}\left(\sum_{j=0}^{k-1}\ketbra{j}{j}\otimes\ketbra{j}{j}+\sum_{j,l=k}^{d_{in}-1}\ketbra{j}{l}\otimes\ketbra{j}{l}\right)_{RB}=\frac1{d_{in}}\left(\sum_{j=0}^{k-1}\ketbra{j}{j}\otimes\ketbra{j}{j}\right)_{RB}+\frac{d_{in}-k}{d_{in}}\ketbra{\phi^+_{d_{in}-k}}{\phi^+_{d_{in}-k}}_{RB}.\]
    Accordingly, \begin{align}\label{ep4.1}
    \mathcal{Q}(\mathrm{D},\ket{\phi^+_{d_{in}}})\geq I_{\eta^{\mathcal{N}}}(R\rangle B)=(1-\frac k{d_{in}})\log_2(d_{in}-k).\end{align}

    On the other hand, for classical information processing consider the classical-quantum state \(\rho_{RA}=\frac1{d_{in}}\sum_{y=0}^{d_{in}-1}\ketbra{y}{y}_R\otimes\ketbra{\psi_y}{\psi_y}_A\). Then the output classical random variable as \(x\in\{0,\cdots, n+k-1\}=: X\), the joint distribution will be the following:
    \begin{align*}
       p(x,y)&=\frac1{d_{in}},\quad y=x\in\{0,\cdots,k-1\}\\
       p(x,y)&=0,\quad y,x\in\{0,\cdots,k-1\}~\&~y\neq x\\
       p(x,y)&=\frac {p_{x-k}}{d_{in}},\quad x\in\{k,\cdots,n+k-1\}~\&~y\in\{k,\cdots,d_{in}-1\}\\
       p(x,y)&=0,\quad x\not\in\{k,\cdots,n+k-1\}~\&~y\in\{k,\cdots,d_{in}-1\}.
    \end{align*}
    With simple numerical exercise, it is now easy to show that
    \begin{align}\label{ep4.2}  
    \mathcal{C}(\mathrm{D},\rho_{RA})\geq I(Y:X)=H(Y)+H(X)-H(XY)=\log d_{in}-\frac{d_{in}-k}d\log_2(d_{in}-k).\end{align}
    Also note that both for \(\ket{\phi^+}_{RA}\) and \(\rho_{RA}\), the marginal state is identical: \(\rho_A=\frac{I_{d_{in}}}{d_{in}}\). Therefore, using Eq. (\ref{ep4.1}) and (\ref{ep4.2}), we can then conclude \[\mathcal{T}(\mathrm{D})\geq\mathcal{Q}(\mathrm{D},\ket{\phi^+_{d_{in}}}_{RA})+\mathcal{C}(\mathrm{D},\rho_{RA})=\log_2 d_{in}.\]
    The above inequality further saturates by using the upper-bound on \(\mathcal{T}(\mathrm{D})\) as derived in Theorem \ref{t2}. This completes the proof. 
\end{proof}
Note that, that the class of quantum channels we have considered in Proposition \ref{p4} implicitly assumes \(d_{in}-k\geq 2\) and hence \(d_{in}>3\) to be a nontrivial one. In particular, for qubit and qutrit systems, quantum channel of the form Eq. (\ref{ep40}) can reach to optimal selector-less demultiplexing strength whenever \(k\in\{0,d_{in}\}\), i.e., the trivial cases of RI and C-Q channels respectively. But does it imply that for \(d_{in}\leq 3\) there is no example of optimal selector-less Q-DEMUX, for which the induced quantum channel is neither RI nor C-Q? In the following, we will show that the class of induced quantum channels for \(d_{in}\geq 2\), expressed as the convex combination of a RI and C-Q channel, admits an instrument realization achieving optimal selector-less demultiplexing strength.
\begin{proposition}\label{p5}
    A quantum channel \(\Lambda_{\text{ind}}\) admits a Q-DEMUX realization \(\mathrm{D}\), for which the selector-less demultiplexing strength becomes optimal, if the channel can be represented as the convex combination of the random isometry and classical-quantum channels.
\end{proposition}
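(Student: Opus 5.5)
The plan is to exhibit one explicit instrument realization of a mixed channel and compute both strengths on the maximally mixed input, then invoke Theorem \ref{t2} for the matching upper bound. Write the channel as
\[
\Lambda_{\text{ind}}(\rho)=q\sum_{j=0}^{d_{in}-1}\bra{j}\rho\ket{j}\sigma_j+(1-q)\sum_{k}p_k\mathcal{V}_k\rho\mathcal{V}_k^\dagger,\qquad q\in[0,1].
\]
I would take the instrument whose outcome set is the disjoint union of the outcome sets of the instruments in Eqs. \eqref{eq:instrument_CQ} and \eqref{eq:instrument_RI}, each weighted by the mixing probability. Its operations are
\[
\mathcal{I}_{(0,j)}=q\bra{j}\cdot\ket{j}\sigma_j,\qquad \mathcal{I}_{(1,k)}=(1-q)p_k\mathcal{V}_k\cdot\mathcal{V}_k^\dagger .
\]
These sum to $\Lambda_{\text{ind}}$, so they define a Q-DEMUX $\mathrm{D}$ with the correct induced channel. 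The common input marginal is $\rho_A=I/d_{in}$.

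For the classical part, I take $\rho_{RA}=\frac1{d_{in}}\sum_y\ketbra{y}{y}\otimes\ketbra{y}{y}$. The joint distribution is $p((0,j),y)=\frac{q}{d_{in}}\delta_{jy}$ and $p((1,k),y)=\frac{(1-q)p_k}{d_{in}}$. Hence, conditioned on a CQ-type outcome, $Y$ is determined, while conditioned on an RI-type outcome, $Y$ is uniform. This gives $H(Y|X)=(1-q)\log_2 d_{in}$, and so
\[
\mathcal{C}(\mathrm{D},\rho_{RA})\ge q\log_2 d_{in}.
\]

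For the quantum part, I send half of $\ket{\phi^+}$ through $\mathrm{D}$. The main obstacle is here. If one naively corrects the RI outcomes with the Proposition \ref{p1} corrections and the CQ outcomes with $\Tr[\cdot]\ketbra{j}{j}$, the averaged state $q\frac1{d_{in}}\sum_j\ketbra{jj}{jj}+(1-q)\phi^+$ lives on overlapping supports. Its coherent information then falls short of $(1-q)\log_2 d_{in}$, because $\mathcal{Q}$ is defined on the averaged corrected state, not conditionally on $x$. To get around this, I would exploit the freedom in the output space $\mathcal{H}_B$ of the corrections: take $\mathcal{H}_B=\mathcal{H}_{in}\otimes\mathbb{C}^2$ with a flag qubit, and set
\[
\mathcal{N}_{(0,j)}(\sigma)=\Tr[\sigma]\ketbra{j}{j}\otimes\ketbra{0}{0},\qquad \mathcal{N}_{(1,k)}(\sigma)=\big(\mathcal{V}_k^\dagger\sigma\mathcal{V}_k+\Tr[\sigma(I-P_k)]\omega\big)\otimes\ketbra{1}{1}.
\]
Then $\eta^{\mathcal{N}}_{RB}$ is block diagonal in the flag, with blocks $\frac{q}{d_{in}}\sum_j\ketbra{jj}{jj}$ and $(1-q)\phi^+$. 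Direct computation gives $S(B)=h(q)+\log_2 d_{in}$ and $S(RB)=h(q)+q\log_2 d_{in}$, where $h$ is the binary entropy. Hence $\mathcal{Q}(\mathrm{D},\ket{\phi^+})\ge(1-q)\log_2 d_{in}$.

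Adding the two bounds gives $\mathcal{T}(\mathrm{D})\ge\log_2 d_{in}$, and Theorem \ref{t2} gives equality. If embedding into a larger $\mathcal{H}_B$ is deemed disallowed, I would instead try to realize the flag inside $\mathcal{H}_{out}$, using spare dimensions or a unitary twirl, but I expect the flag argument to be the intended route.
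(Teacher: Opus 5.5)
Your proposal is correct and follows essentially the same route as the paper. Both use the disjoint-union instrument with CQ and RI outcomes weighted by the mixing probability, the maximally mixed input with the CQ encoding and $\ket{\phi^+}$, and the same corrections, and both close with Theorem \ref{t2}. Your flag-qubit output space $\mathcal{H}_{in}\otimes\mathbb{C}^2$ is exactly the paper's $\mathcal{H}_B=\mathcal{H}_A\oplus\mathcal{H}_{A'}$, so the fallback you mention at the end is not needed.
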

\begin{proof}
    Let us consider a quantum channel \(\Lambda_{\text{ind}}:\mathcal{D}(\mathcal{H}_{in})\to\mathcal{D}(\mathcal{H}_{out})\) of the following form:
    \begin{align*}
        \Lambda_{\text{ind}}(\rho)= p\sum_{k=0}^{K-1}p_k\mathcal{V}_k\rho\mathcal{V}_k^{\dagger}+(1-p)\sum_j\bra{\psi_j}\rho\ket{\psi_j}\sigma_j,
    \end{align*}
    where, for all \(k,~\mathcal{V}_k:\mathcal{H}_{in}\to\mathcal{H}_{out}\) is an isometry, \(\{\ket{\psi_j}\}_{j=0}^{d_{in}-1}\) forms an orthogonal basis for \(\mathcal{H}_{in}\) and for all \(j\in\{0,\cdots,d_{in}-1\},~\sigma_j\in\mathcal{D}(\mathcal{H}_{out})\). 
   
This channel has an instrument realization \(\mathcal{I}=\{\mathcal{I}_x\}_{x\in X}\) with
\(|X|=K+d_A\) of \(\Lambda_{\text{ind}}\) of the form 
\[
\mathcal{I}_x(\rho)=\begin{dcases} p \,p_x\mathcal{V}_x\rho\mathcal{V}^\dagger_x, &
x=0,\dots,K-1\\[1em]
(1-p)\bra{\psi_j}\rho\ket{\psi_j}\sigma_j, & x=K+j,\ j=0,\dots,d_A-1.
\end{dcases}
\]
We introduce the Q-DEMUXs outcome space
\(\mathcal{H}_B=\mathcal{H}_A\oplus\mathcal{H}_{A'}\) with \(A\simeq A'\), and a set of correction  operations
\(\mathcal{N}_x: \mathcal{D}(\mathcal{H}_Q)\to\mathcal{D}(\mathcal{H}_B)\) as
\[
\mathcal{N}_x(\sigma_Q)=\begin{dcases}
\mathcal{V}^\dagger_x\sigma_Q\mathcal{V}_x+\Tr[\sigma_Q(I-P_x)]\omega, & x=0,\dots,
K-1\\[1em]
\Tr[\sigma_Q]\ketbra{\phi_j}{\phi_j}_{A'}, & x=K+j,\ j=0,\dots, d_A-1,
\end{dcases}
\]
where \(P_x\) and \(\omega\) are as in the proof of Proposition \ref{p1} and
\(\{\ket{\phi_j}_{A'}\}_j\) is any orthonormal basis of the space \(\mathcal{H}_{A'}\). 
Now, consider the completely mixed input state \(\rho_A=\frac1{d_{in}}I\). 
It is easily checked that using its  C-Q
extension \(\rho_{RA}=\frac1{d_{in}}\sum_j \ketbra{j}{j}_R\otimes
\ketbra{\psi_j}{\psi_j}_A\) for classical information processing, we get
\(\mathcal{C}(\mathrm{D},\rho_{RA})=(1-p)\log_2d_A\). As for the quantum information
processing, after applying the instrument and corrections to the maximally entangled
state \(\ket{\phi^+}_{RA}\) and tracing out the classical output \(C\), we obtain the state
\[
\eta^{\mathcal{N}}_{RB}=p\ketbra{\phi^+}{\phi^+}_{RA}+(1-p)\sum_j\ketbra{\psi_j}{\psi_j}_R\otimes\ketbra{\phi_j}{\phi_j}_{A'},
\]
note that here \(\ket{\phi^+}_{RA}\) can be interpreted as an entangled state in
\(\mathcal{H}_{RB}\). We then have 
\[
\mathcal{Q}(\mathrm{D},\ket{\phi^+})\ge I_{\eta^{\mathcal{N}}}(R\rangle B)=p\log_2 d_A,
\]
so that
\[
\mathcal{T}(\mathrm{D})\ge \mathcal{C}(\mathrm{D},\rho_{RA})
+\mathcal{Q}(\mathrm{D},\ket{\phi^+}_{RA})\ge (1-p)\log_2d_{A}+p\log_2d_{A}=\log_2d_{A}.\]
Finally, we complete the proof by saturating the above inequality with the upper-bound on \(\mathcal{T}(\mathrm{D})\) derived in Theorem \ref{t2}.
\end{proof}
Notably, with further investigation, we found no quantum channel other than those in Proposition \ref{p3} and \ref{p4}, for which there exists a Q-DEMUX realization achieving the optimal selector-less demultiplexing strength. 
%
%
%
%
\section{Operational Interpretation of Incompatibility of Quantum Instruments}
A Q-DEMUX at its core configuration involves the instrument realizations of a given quantum channel. Depending upon the nature of the data to be processed, by accessing the selector, the sender Alice can route the quantum system via one of the two instrument realizations of the quantum channel. While in Theorem \ref{t1}, we have characterized the set of quantum channels for which there exists instrument realizations leading to the maximal demultiplexing strength. However, one may tempted to ask about the salient features of these instrument realizations, achieving the optimality. Even, more generally, can the demultiplexing strength \(\mathcal{T}_s(\mathrm{D})\) for any Q-DEMUX \(\mathrm{D}\) could lead to some fundamental characteristics of the instrument realizations? With the help of Lemma \ref{l3} and Theorem \ref{t2} we will now set a threshold on the demultiplexing strength of a Q-DEMUX, violation of which necessarily implies that the two instrument realization corresponding to the induced quantum channel must be incompatible.

The incompatibility of quantum measurements, being a key non-classical feature of quantum systems, leads to various operational signatures, viz., quantum nonlocality \cite{plavala2025all}, quantum steering \cite{cavalcanti2016quantitative}, discrimination of quantum states \cite{skrzypczyk2019all}. Similarly, one could extend the notion of incompatibility for quantum channels \cite{heinosaari2017incompatibility} and finally for quantum instruments as an unification of both the induced POVM measurements and the induced quantum channels \cite{leppajarvi2024incompatibility}. While there various way to bring the notion of incompatibility for quantum instruments \cite{mitra2022compatibility,mitra2023characterizing,leppajarvi2024incompatibility}, here we will consider the traditional one.

\begin{definition}\label{d2}
A pair of quantum instruments \(\mathcal{I}(\Omega,\mathcal{H}_{in},\mathcal{H}_{out})\) and \(\mathcal{J}(\Omega',\mathcal{H}_{in},\mathcal{H}_{out})\) is said to be traditionally compatible, if there exists a joint instrument \(\mathcal{G}(\Omega\times\Omega',\mathcal{H}_{in},\mathcal{H}_{out})\), such that
\begin{align*}
    \sum_{y\in\Omega'}\mathcal{G}_{x,y}=\mathcal{I}_x~\forall x\in\Omega,\text{ and }\sum_{x\in\Omega}\mathcal{G}_{x,y}=\mathcal{J}_y~\forall y\in\Omega'.
\end{align*}
\end{definition}
The definition itself justifies the choice of traditional incompatibility in the present context. In particular, note that
\[\mathcal{I}=\sum_{x\in\Omega}\mathcal{I}_x=\sum_{x\in\Omega}\sum_{y\in\Omega'}\mathcal{G}_{x,y}=\sum_{y\in\Omega'}\sum_{x\in\Omega}\mathcal{G}_{x,y}=\sum_{y\in\Omega'}\mathcal{J}_y=\mathcal{J},\]
that is, the notion of traditional incompatibility deals with those set of quantum instruments for which the induced quantum channel is identical, which is the same constraint we have over Q-DEMUX.  
In other words, any two quantum instruments, with different induced quantum channels, are beyond the scope of traditional compatibility.  

In the following, we will set a bound on the demultiplexing strength \(\mathcal{T}_s(\mathrm{D})\), the violation of which certifies the corresponding pair of instrument realizations must be traditionally incompatible. 
\begin{theorem}\label{t3}
   For any Q-DEMUX \(\mathrm{D}:\mathcal{D}(\mathcal{H}_{in})\to\mathcal{D}(\mathcal{H}_{out})\times X\), if the selector can access two instrument realizations of \(\Lambda_{\text{ind}}\) to achieve \(\log_2 d_{in}<\mathcal{T}_s(\mathrm{D})\leq 2\log_2 d_{in}\), then they must be traditionally incompatible. 
\end{theorem}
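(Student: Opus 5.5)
We argue by contraposition: we assume the two instrument realizations $\mathcal{I}=\{\Lambda_{0,x}\}_{x\in\Omega}$ (used for $s=0$) and $\mathcal{J}=\{\Lambda_{1,y}\}_{y\in\Omega'}$ (used for $s=1$) of $\Lambda_{\text{ind}}$ are traditionally compatible, and show that then $\mathcal{T}_s(\mathrm{D})\le\log_2 d_{in}$. The upper bound $2\log_2 d_{in}$ is already Lemma \ref{l3}, so only the lower threshold needs proof. Compatibility in the sense of Definition \ref{d2} gives a joint instrument $\mathcal{G}=\{\mathcal{G}_{x,y}\}$ with outcome set $\Omega\times\Omega'$ and the same induced channel $\Lambda_{\text{ind}}$. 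We then regard $\mathcal{G}$ as a single selector-less Q-DEMUX $\mathrm{D}^{\mathcal{G}}$, and the goal is to compare $\mathcal{T}_s(\mathrm{D})$ with $\mathcal{T}(\mathrm{D}^{\mathcal{G}})$, which Theorem \ref{t2} bounds by $\log_2 d_{in}$.

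The key step is to show that, for every input $\rho_A$ with C-Q extension $\rho_{RA}$ and purification $\ket{\psi}_{RA}$, we have
\[
\mathcal{C}(\mathrm{D}_0,\rho_{RA})\le\mathcal{C}(\mathrm{D}^{\mathcal{G}},\rho_{RA}),\qquad \mathcal{Q}(\mathrm{D}_1,\ket{\psi}_{RA})\le\mathcal{Q}(\mathrm{D}^{\mathcal{G}},\ket{\psi}_{RA}).
\]
\begin{itemize}
\item[(i)] \emph{Classical part.} The outcome $x$ of $\mathcal{I}$ is obtained from the outcome $(x,y)$ of $\mathcal{G}$ by the marginalization $(x,y)\mapsto x$. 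By the data-processing inequality, exactly as in Lemma \ref{l1}, $I(Y:X)\le I(Y:XY')$, where $Y'$ denotes the second outcome of $\mathcal{G}$.
\item[(ii)] \emph{Quantum part.} Let $\{\mathcal{N}_y\}$ be the optimal corrections for $\mathrm{D}_1$. For $\mathrm{D}^{\mathcal{G}}$ we use the corrections $\mathcal{N}'_{x,y}:=\mathcal{N}_y$. Since $\sum_x\mathcal{G}_{x,y}=\Lambda_{1,y}$, the corrected average state $\sum_{x,y}(id\otimes\mathcal{N}_y\circ\mathcal{G}_{x,y})(\psi)$ coincides with $\sum_y(id\otimes\mathcal{N}_y\circ\Lambda_{1,y})(\psi)$. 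The coherent information is therefore unchanged, and maximizing over all corrections for $\mathrm{D}^{\mathcal{G}}$ can only increase it.
\end{itemize}
This is the same coarse-graining mechanism as in the proof of Lemma \ref{l4.5}.

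Adding the two inequalities for the common marginal $\rho_A$ and maximizing over $\rho_A$ gives
\[
\mathcal{T}_s(\mathrm{D})\le\mathcal{T}(\mathrm{D}^{\mathcal{G}})\le\log_2 d_{in},
\]
where the last inequality is Theorem \ref{t2} applied to $\mathrm{D}^{\mathcal{G}}$. Contraposing, $\mathcal{T}_s(\mathrm{D})>\log_2 d_{in}$ forces incompatibility.

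The main point requiring care is that both capacities are evaluated on the \emph{same} marginal $\rho_A$ within one maximization in Eq. \eqref{eq:e4}. This matches the structure of Eq. \eqref{e13}, so the per-input comparison transfers directly. The other point to check is that Theorem \ref{t2} applies to an arbitrary instrument $\mathcal{G}$, not only to rank-one refinements. This holds because Theorem \ref{t2} is stated for any Q-DEMUX, with Lemma \ref{l4.5} used only internally to refine the instrument further without decreasing $\mathcal{T}$.
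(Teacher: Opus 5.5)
Your proposal is correct and follows essentially the same route as the paper: you build a single selector-less Q-DEMUX from the joint instrument $\mathcal{G}$, recover $\mathcal{I}$ by marginalizing $(x,y)\mapsto x$ and $\mathcal{J}$ by using $x$-independent corrections $\mathcal{N}_y$, and conclude $\mathcal{T}_s(\mathrm{D})\le\mathcal{T}(\mathrm{D}^{\mathcal{G}})\le\log_2 d_{in}$ by Theorem \ref{t2}. The differences are only cosmetic: you argue by contraposition rather than contradiction, and your per-input comparison on the common marginal $\rho_A$ spells out the inequality $\mathcal{T}(\mathrm{D}')\ge\mathcal{T}_s(\mathrm{D})$, which the paper states directly.
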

\begin{proof}
    While the upper-bound is trivial from Lemma \ref{l3}, we prove the lower-bound by contradiction. 
    
    Suppose, there exists two traditionally compatible instrument realizations \(\mathcal{I}(\Omega,\mathcal{H}_{in},\mathcal{H}_{out})\) and \(\mathcal{J}(\Omega',\mathcal{H}_{in},\mathcal{H}_{out})\) of a quantum channel \(\Lambda_{\text{ind}}:\mathcal{D}(\mathcal{H}_{in})\to\mathcal{D}(\mathcal{H}_{out})\) for a Q-DEMUX \(\mathrm{D}\), such that with these two realizations,  \(\mathcal{T}_s(\mathrm{D})>\log_2 d_{in}\). In particular, the Q-DEMUX \(\mathrm{D}\) implements the instrument \(\mathcal{I}\) for classical information processing as \(\mathrm{D}_0\), i.e., with the selector choice \(s=0\), while the instrument \(\mathcal{J}\) as \(\mathrm{D}_1\) in the context of quantum information processing. 
    Additionally, we denote the correcting operations as \(\{\mathcal{J}^*_y\}_{y\in\Omega}\) for quantum information processing.
    
    By Definition \ref{d2}, we can therefore have a parent instrument \(\mathcal{G}(\Omega\times\Omega',\mathcal{H}_{in},\mathcal{H}_{out})\) such that summing over the random variables \(x\in\Omega\) (\(y\in\Omega'\)) gives \(\mathcal{J}\)(\(\mathcal{I}\)).
As mentioned earlier, \(\sum_{x\in\Omega}\sum_{y\in\Omega'}\mathcal{G}_{x,y}\equiv\Lambda_{\text{ind}}\), which means \(\mathcal{G}\) itself is a valid instrument realization for the induced quantum channel \(\Lambda_{\text{ind}}\). 
    
    Let us then consider another Q-DEMUX \(\mathrm{D}'\) for the induced channel \(\Lambda_{\text{ind}}\) involving the instrument realization \(\mathcal{G}\):\\
    \noindent
    (i) For input quantum state \(\rho\in\mathcal{D}(\mathcal{H}_{in})\), the un-normalized output quantum state at the quantum port \(Q\) is \(\mathcal{G}_z(\rho)\) with a probability \(\Tr[\mathcal{G}_z(\rho)]\), whenever\\
    (ii) \(z:=(x,y)\) is stored in the classical register \(C\).
    
    Moreover, we can associate the same set of correcting operations \(\{\mathcal{J}_y^*\}_y\) for every \(z:=(x,y)\), independent of \(x\in\Omega\). On the other hand, we can identify the output classical random variable as \(x\in X\) for every \(z:=(x,y)\), independent of \(y\in\Omega'\). With this particular postprocessing, it is easy to identify that the classical (quantum) information processing via \(\mathrm{D}'\) effectively implements the instrument realization \(\mathcal{I}\) (\(\mathcal{J}\)), i.e., the Q-DEMUX realization \(\mathrm{D}_0\) (\(\mathrm{D}_1\)).

    Now, the demultiplexing strength (as in Eq. (\ref{e13})) of \(\mathrm{D}'\), with the instrument realization \(\mathcal{G}\), will take the form
    \begin{align*}
        \mathcal{T}(\mathrm{D}')&=\max_{\rho_A\in\mathcal{D}(\mathcal{H}_{in})}[\mathcal{C}(\mathrm{D}',\rho_{RA})+\mathcal{Q}(\mathrm{D}',\ket{\psi}_{RA})]\\
        &\geq \max_{\rho_A\in\mathcal{D}(\mathcal{H}_{in})}[\mathcal{C}(\mathrm{D}_{0},\rho_{RA})+\mathcal{Q}(\mathrm{D}_{1},\ket{\psi}_{RA})]\\
        &=\mathcal{T}_s(\mathrm{D})>\log_2 d_{in}.
    \end{align*}
    Thanks to Theorem \ref{t2}, the selector-less demultiplexing strength is always upper-bounded by \(\log_2d_{in}\), which leads to a contradiction with the last inequality. Therefore, the pair of instrument realizations \(\mathcal{I}\) and \(\mathcal{J}\) for the induced quantum channel \(\Lambda_{\text{ind}}\) must be traditionally incompatible.
\end{proof}
%
%
%
%
\section{Independent Maximization of the demultiplexing strength}
The quantification of the demultiplexing strength, in our analysis, is motivated by the fact that a single quantum state can be used to encode both the classical as well as the quantum information and hence the quantity involves maximization over a single marginal quantum state for both the capacities. Alternatively, one can define independent maximizations for both the quantities. While such an independent maximization lacks its operational meaning in the context of Q-DEMUX, one could interpret the quantity as the joint capacity of quantum channel over its all possible instrument realizations. Interestingly, for perfect Q-DEMUXs, where the sender could access the selector the independent maximization will not change the results. This is because the maximization achieved in Lemma \ref{l3}, remains unaltered even if the two of the capacities are independently maximized. However, beyond the perfect one, the total demultiplexing strength can differ for the independent maximization of the classical and the quantum capacities.

Moreover, in the context of selector-less demultiplexing strength, the situation is completely different even for the optimal one. In fact, there is no immediate non-trivial upper-bound for the selector-less demultiplexing strength, when the classical and the quantum capacity is allowed to maximize independently. As an example, consider the following quantum channel for any arbitrary dimension
\begin{align}\label{e19}
\Lambda_{\text{ind}}^{(d,k)}(\rho)=\sum_{i=0}^{k-1}\bra{i}\rho\ket{i}\ketbra{i}{i}+\sum_{j,l=k}^{d-1}\bra{j}\rho\ket{l}\ketbra{j}{l},~\text{where, }2\leq k<(d-1).\end{align}
Essentially, the channel decoheres a \(k<d\)-dimensional subspace in \(k\)-dimensional computational basis \(\{\ketbra{i}{i}\}_{i=0}^{k-1}\), keeping the rest (\(d-k\))-dimensional subspace invariant. Note that, the channel is similar to that of in Proposition \ref{p4}. In particular, to make it simpler, here we have replaced the \(k\)-dimensional C-Q channel with a classical-classical one and the RI with the identity channel on the \((d-k)\)-dimensional subspace.
The channel then admits an isometric extension of the following form:
\[\mathcal{V}_{\Lambda}^{(d,k)}\ket{\psi}_A=\sum_{i=0}^{k-1}c_i\ket{i}_Q\otimes\ket{i}_E+\sum_{j=k}^{d-1}c_j\ket{j}_Q\otimes\ket{k}_E,~\text{where, }\sum_{i=0}^{d-1}c_i\ket{i}=\ket{\psi}\in\mathcal{H}_{d}.\]
Similar to that of Proposition \ref{p4}, we can now construct a Q-DEMUX \(\mathrm{D}\), for \(\Lambda_{\text{ind}}\), involving an instrument realization \(\mathcal{I}(\Omega:=\{0,\cdots, k\}, \mathcal{H}_{in},\mathcal{H}_{in})\) corresponding to the above isometry. In particular, consider 
\[\mathcal{I}_x=\langle x|\cdot|x\rangle \ketbra{x}{x},~x\in\{0,\cdots,k-1\}~\&~\mathcal{I}_{k}=\sum_{j,l=k}^{d-1}\langle j|\cdot |l\rangle \ketbra{j}{l}.\]
Accordingly, the outcome of the Q-DEMUX \(\mathrm{D}\) for \(\Lambda_{\text{ind}}^{(d,k)}\), using the above instrument realization becomes
\begin{align}\label{e20}
    \mathrm{D}(\rho_A)=\sum_{i=0}^{k-1}\langle i|\rho|i\rangle\ketbra{i}{i}_Q\otimes\ketbra{i}{i}_C+\left(\sum_{j,l=k}^{d-1}\ketbra{j}{j}\rho\ketbra{l}{l}\right)_Q\otimes\ketbra{k}{k}_C,~\forall\rho_A\in\mathcal{D}(\mathcal{H}_{in}).
\end{align}
We complete the configuration of \(\mathrm{D}\) by assigning the correcting operation \(id_d\) (no correction), independent of the indices registered in the classical port \(C\).

Now, suppose an input classical random variable \(y\in Y\) with \(|Y|=k+1\), is encoded in the quantum states \(\{\{\ketbra{i}{i}\}_{i=0}^{k-1},~\tilde{\rho}\}\), where \(\tilde{\rho}\in\text{Span}\{\ket{k},\cdots,\ket{d-1}\}\). Then the classical quantum state between the reference system and Alice takes the form:
\[\tilde{\rho}_{RA}(p_y)=\sum_{y=0}^{k-1}p_y\ketbra{y}{y}_R\otimes\ketbra{y}{y}_A+p_k\ketbra{k}{k}_R\otimes\tilde{\rho}_A,\text{ with }\rho_A=\Tr_R[\tilde{\rho}_{RA}]=\sum_{y=0}^{k-1}p_y\ketbra{y}{y}+p_k\tilde{\rho}.\]
Since the outcome of the projective measurement \(\{\ketbra{x}{x}\}_{x=0}^k\) on the subsystem \(E\) is registered in the classical output port \(C\) of the Q-DEMUX, it is now trivial to see that \(p(x|y)=\delta_{x,y}\) for every \(x,y\in\{0,\cdots,k\}\). Therefore the classical capacity of the Q-DEMUX, with the classical quantum state \(\tilde{\rho}_{RA}\) is
\[\mathcal{C}(\mathrm{D},\tilde{\rho}_{RA}(p_y))=\max_{\{p_y\}_{y=0}^k}I(Y:X)=\log_2(k+1).\]
 On the other hand, with this same isometric realization, consider the transmission of subsystem \(A\) of the (\(d-k\))-dimensional maximally entangled state 
 \[\ket{\phi^+_{d-k}}_{RA}=\frac1{\sqrt{d-k}}\sum_{j=k}^{d-1}\ket{j}_{R}\otimes\ket{j}_A,\text{ where }\rho'_A=\Tr_R[\ketbra{\phi^+_{d-k}}{\phi^+_{d-k}}]=\frac1{d-k}\sum_{j=k}^{d-1}\ketbra{j}{j}.\]
 Using Eq.(\ref{e20}), for \(\rho'_A\) as the input to the Q-DEMUX only \(x=k\) will be registered in the classical register \(C\) and the output quantum state will remain invariant. Therefore, the quantum capacity of the Q-DEMUX with this specific choice will become,
 \[\mathcal{Q}(\mathrm{D},\ket{\phi^+_{d-k}}_{RA})=I_{\phi^+_{d-k}}(R\rangle Q)=\log_2(d-k).\]
 Evidently, the selector-less total demultiplexing strength of the Q-DEMUX \(\mathrm{D}\), under independent maximization, becomes
 \begin{align*}
 \mathcal{T}^{\text{IM}}(\mathrm{D})&=
 \max_{\rho_A\in\mathcal{D}(\mathcal{H}_{in})}\mathcal{C}(\mathrm{D},\rho_{RA})+\max_{\rho'_A\in\mathcal{D}(\mathcal{H}_{in})}\mathcal{Q}(\mathrm{D},\ket{\psi}_{RA})\\&\geq\mathcal{C}(\mathrm{D},\tilde{\rho}_{RA}(p_y))+\mathcal{Q}(\mathrm{D},\ket{\phi^+_{d-k}}_{RA})=\log_2(k+1)(d-k)
 \end{align*}
 Finally, for every \(d>3\) and \(2\leq k<(d-1)\) it is easy to check that \(\log_2(k+1)(d-k)>\log_2d\). 
 
 It is important to mention that in such a scenario, by looking at the classical register of \(\mathrm{D}\), it is possible to predict the nature of the information Alice is trying to convey. In particular, if the classical register \(C\) outputs \(x\in\{0,\cdots,k-1\}\) then the information must be classical, specifically \(y=x\); While for \(x=k\), it can both quantum as well as classical. Moreover, after receiving the encoded quantum state, if Alice perform a projective measurement \(\{\{\ketbra{i}{i}\}_{i=0}^{k-1},\sum_{j=k}^{d-1}\ketbra{j}{j}\}\) on it, she could predict the information is classical whenever any other than the last projector clicks; In fact, she could know the exact classical message at all of these instances. Therefore, the information is no more oblivious to her.

 Interestingly, the channel in Eq. (\ref{e19}) can also be written in form of a random unitary channel, given by
 \begin{align}\label{e21}
     \Lambda_{\text{ind}}^{(d,k)}(\rho)=\frac1k\sum_{m=0}^{k-1}V_m\rho V_m^{\dagger},\text{ where, }\mathcal{L}(\mathcal{H}_{d})\ni V_m\equiv U_m\oplus I_{d-k}~\&~ U_m=\sum_{j=0}^{k-1}e^{i\frac{2\pi(m.j)}k}\ketbra{j}{j},~\forall m\in\{0,\cdots,k-1\},
 \end{align}
 where, \(I_{d-k}\) is the identity operator on the \((d-k)\)-dimensional subspace of \(\mathcal{H}_d\), spanned by \(\{\ket{k},\cdots,\ket{d-1}\}\). Therefore, \(\Lambda_{\text{ind}}^{(d,k)}\) also admits the following isometric extension
 \[\tilde{\mathcal{V}}_{\Lambda}^{(d,k)}\ket{\psi}_A=\frac1{\sqrt{k}}\sum_{m=0}^{k-1}V_m\ket{\psi}_Q\otimes\ket{m}_E,~\text{for every }\ket{\psi}\in\mathcal{H}_d.\]
 Now, if Alice has access to the selector \(s\), then during the quantum information processing she could use another Q-DEMUX realization \(\tilde{\mathrm{D}}\) for the same \(\Lambda_{\text{ind}}\), invoking another instrument realization using the above isometry. Then the action of \(\tilde{\mathrm{D}}\) can be written as,
    \[\tilde{\mathrm{D}}(\rho_A)=\frac1k\sum_{m=0}^{k-1}\left(V_m\rho V_m^{\dagger}\right)_Q\otimes\ketbra{m}{m}_C.\]
    We can now define a complete Q-DEMUX \(\mathrm{D}_s\), with the access to the selector, for the quantum channel \(\Lambda_{\text{ind}}^{(d,k)}\): For the classical information processing with \(s=0\), it is \(\mathrm{D}_0\equiv \mathrm{D}\), while for the quantum information processing with \(s=1,~\mathrm{D}_1\equiv\tilde{\mathrm{D}}\). In both the cases, we can assign the classical output as \(x\in X:=\{0,\cdots,k\}\) in the classical register \(C\), along with the correcting operations \(\{V_x^{\dagger}\}_{x=0}^{k-1}\) and \(id_Q\) (no correction) for \(x=k\).

    Now, for the classical information processing, with \(s=0\), along with the same \(\tilde{\rho}_{RA}(p_y=\frac1{k+1})\) exactly \(\log_2(k+1)\)-bits can be communicated perfectly. On the other hand, for quantum information with \(s=1\) a \(d\)-dimensional maximally entangled state \(\ket{\phi^+_d}\) can be established between reference system and quantum output port \(Q\), using the random unitary realization. This implies, that by accessing the selector, along with independent maximization for both classical and quantum capacity, for a Q-DEMUX \(\mathrm{D}\) with induced quantum channel \(\Lambda_{\text{ind}}^{(d,k)}\),
    \[\mathcal{T}^{\text{IM}}_s(\mathrm{D})\geq \log_2d(k+1).\]
    Note that, for every \(k<d-1\) the quantity \(\log_2d(k+1)<2\log_2d\), respecting the bound derived in Lemma \ref{l3}.
%
%
%
%
\section{Discussions}
In summary, we have extended the notion of classical demultiplexer in the quantum regime, where both the classical as well as quantum information can be processed through the same transmission line. Construction of such a generalized information demultiplexer involves two of the different instrument realization of the same quantum channel, accessing which the sender is allowed to demultiplex the encoded information in the respective port. For any quantum channel, which admits both random isometry as well as classical-quantum representation, by accessing these two instrument realizations Alice can respectively demultiplex the quantum and classical information, encoded in the same density matrix, perfectly. While the implications of quantum instruments is well-known in quantum computing models \cite{buzek2006programmable, ji2024incompatibility, sau2025sequential}, our work highlights a new operational insights for the same in the context of generalized information processing. Additionally, our results provides an operational interpretation of incompatible quantum instruments in terms of their demultiplexing strength. This connection suggests a fundamental trade-off between the classical and quantum information transmission abilities of a quantum instrument, allowing all possible classical post-processing on it. In particular, any quantum instrument classically post-processed from another one must be traditionally compatible in nature  \cite{leppajarvi2021postprocessing,mitra2022compatibility}. Therefore, any quantum instrument, along with classical post-processing, can not transfer both the classical and the quantum information perfectly. In fact, if a single quantum state is used to encode both the classical and the quantum information (depending upon the correlation shared with the reference system) then the total information processing strength can not go beyond \(\log_2d_{in}\)-bits. However, if the given instrument allows independent ensembles to maximize their total information processing ability this bound can be violated for any arbitrary dimension.  

It is also important to mention that our results can also be interpreted as a highly restricted settings of information exchange in an environment assisted communication model. Notably, the assistance of a friendly environment in context of transferring quantum resources has been exemplified in several theoretical \cite{gregoratti2004quantum, winter2005environment, hayden2005correcting, karumanchi2016classical, chowdhury2025minimal, sarkar2025lost} and experimental premises \cite{pirandola2021environment}. Since, every quantum instrument can be seen as a specific measurement performed on the environment system, the quantum information processing via the Q-DEMUX can be explained in terms of the environment assisted quantum communication. However, the present scenario is restrictive in the sense that a Q-DEMUX by construction only allow to access the environment in each of the runs individually; no joint access on the environment of the multiple quantum channels are permissible. On the other hand, for the classical information processing, since the classical information decoded at the environment's end, it can be interpreted as the single-shot classical capacity of the complementary channel, without any further assistance from the main channel. Therefore, the bound on the selector-less demultiplexing strength, in other words, identify the maximal amount of generalized information can be processed through a quantum channel, with an one-sided measurement assistance from the environment.

Finally, our result paves the way for various open directions for both in the contexts of quantum circuit architecture and general information theory. While the proposed Q-DEMUX is solely intended to segregating the generalized information encoded in a quantum state with a binary selector, in a practically motivated scenario one could expect to demultiplex the respective information further among multiple quantum or classical ports. A simple way to configure the same is just to implement a pair of classical and quantum demultiplexer in the respective ports of the Q-DEMUX, but that may not be the most efficient circuit realization in terms of the consumed gates and the ancillary systems. This naturally instigates about the experimental manifestations of our findings in table-top set-up, further with successful implementation in the quantum networks. On the other hand, although our results set a bound on the demultiplexing strength of compatible quantum instruments, it remains unaddressed whether the demultiplexing strength could be a valid measure for the traditional incompatibility of quantum instruments. Moreover, the instances of independent maximization of classical and quantum information processing ability can be seen as a first step towards characterizing the true information processing ability of a quantum instrument. The further characterization of which could be an important direction for operational implications of the quantum instruments in general. 

\medskip
\textit{Acknowledgments.} This work is supported by the Slovak Research and Development Agency through Grant no. APVV-22-0570, the Scientific Grant Agency of the Ministry of Education, Slovak Republic through Grant no. VEGA 2/0128/24. TG acknowledges the Štefan Schwarz Support Fund 2025/OV1/046 by the Slovak Academy of Sciences.

\bibliography{reference}

\end{document}